\definecolor{wrwrwr}{rgb}{0.3803921568627451,0.3803921568627451,0.3803921568627451}
\definecolor{rvwvcq}{rgb}{0.08235294117647059,0.396078431372549,0.7529411764705882}
\pgfplotsset{compat=1.15}
\newcommand{\be}{\begin{equation}}
	\newcommand{\ee}{\end{equation}}
\newcommand{\ba}{\begin{eqnarray}}
	\newcommand{\ea}{\end{eqnarray}}
\newcommand{\topt}[2]{\ket{#1}\ket{#2}}
\newtheorem{theorem}{Theorem}
\newtheorem{corollary}{Corollary}
\newtheorem{definition}{Definition}
\newtheorem{proposition}{Proposition}
\newtheorem{observation}{Observation}
\definecolor{rvwvcq}{rgb}{0,0,1}
\def\>{\rangle}
\def\<{\langle}
\newcommand{\tri}[3]{\ket{#1}\ket{#2}\ket{#3}}
\DeclareRobustCommand{\cev}[1]{%
	{\mathpalette\do@cev{#1}}%
}
\newcommand{\do@cev}[2]{%
	\vbox{\offinterlineskip
		\sbox\z@{$\m@th#1 x$}%
		\ialign{##\cr
			\hidewidth\reflectbox{$\m@th#1\vec{}\mkern4mu$}\hidewidth\cr
			\noalign{\kern-\ht\z@}
			$\m@th#1#2$\cr
		}%
	}%
}
\begin{document}
	



\title{Multiparty orthogonal product states with minimal genuine nonlocality}
	
\author{Sumit Rout}
\affiliation{International Centre for Theory of Quantum Technologies (ICTQT), University of Gdańsk, 80-308 Gdańsk, Poland}
	
\author{Ananda G. Maity}
\affiliation{S.N. Bose National Center for Basic Sciences, Block JD, Sector III, Salt Lake, Kolkata 700106, India.}

\author{Amit Mukherjee}
\affiliation{S.N. Bose National Center for Basic Sciences, Block JD, Sector III, Salt Lake, Kolkata 700106, India.}
		
\author{Saronath Halder}
\affiliation{Harish-Chandra Research Institute, HBNI, Chhatnag Road, Jhunsi, Allahabad 211019, India.} 
	
\author{Manik Banik}
\affiliation{School of Physics, IISER Thiruvananthapuram, Vithura, Kerala 695551, India.}

\begin{abstract}
Nonlocality without entanglement and its subsequent generalizations offer deep information-theoretic insights and subsequently find several useful applications. Concept of genuinely nonlocal set of product states emerges as a natural multipartite generalization of this phenomenon. Existence of such sets eventually motivates the problem concerning their entanglement-assisted discrimination. Here, we construct examples of genuinely nonlocal product states for  arbitrary number of parties. Strength of genuine nonlocality of these sets can be considered minimal as their perfect discrimination is possible with entangled resources residing in Hilbert spaces having the smallest possible dimensions. Our constructions lead to fully separable measurements that are impossible to implement even if all but one party come together. Furthermore, they also provide the opportunity to compare different multipartite states that otherwise are  incomparable under single copy local manipulation.
\end{abstract}



\maketitle
\section{Introduction} 
Quantum entanglement has been established as the useful resource for numerous practical tasks, starting form advanced means of communication \cite{Bennett92,Bennett93,Chiribella21}, improved metrology and estimation \cite{Walther04,Mitchell04,Joo11,Giovannetti11} to randomness processing \cite{Pironio10,Colbeck12,Liu21}. For multipartite systems, entanglement appears in different inequivalent forms \cite{Greenberger90,Dur00,Verstraete02} and accordingly finds more exotic applications \cite{Dur99,Giovannetti04,Helwig12,Bhattacharya21}. Characterization, quantification, and detection of quantum entanglement therefore have practical relevance and it vastly shaped the research direction in quantum information theory during last
three decades (see \cite{Guhne09,Horodecki09} and references therein). {Entanglement also lies at the core of almost all foundational debates in quantum theory \cite{Einstein35,Bohr35,Schrodinger35,Bell64,Bell66,Wiseman07,Pusey12,Bong20}. In particular, it is crucial to establish the puzzling nonlocal feature of quantum theory. J. S. Bell, in his seminal result \cite{Bell64,Bell66}, derived an experimentally testable criterion that any {\it local-realistic} theory must satisfy, whereas quantum statistics obtained from suitably chosen local measurements performed on properly chosen entangled state can violate this inequality and hence establish nonlocal feature of quantum theory. Several experiments with variety of quantum systems have reported positive Bell test and thus ensure nonlocal nature of quantum world \cite{Aspect82,Hensen15,BIG18,Rauch18}.}

Entanglement has also been proved to be advantageous in hypothesis testing and discrimination tasks \cite{Piani09,Hirche21,Pirandola19,Takagi19}. A particular interest is the local state discrimination problem, where the aim is to identify a multipartite quantum state, drawn randomly from a known set of states, under the operational paradigm of local operation and classical communication (LOCC). {In such a scenario, quantum theory exhibits a different kind of nonlocal behaviour that involves no entanglement and is distinct from Bell nonlocality. In a seminal paper, Bennett {\it et al.} provide examples of orthogonal product bases for multipartite system \cite{Bennett99} that are locally indistinguishable. They coined the term `quantum nonlocality without entanglement' for this phenomenon as perfect discrimination of the states and requires `nonlocal' (read as global/joint) measurement on the composite system. Subsequently this result motivates plethora of research in general local state discrimination problem \cite{Walgate00,Ghosh01,Walgate02,Horodecki03,Watrous05,Hayashi06,Niset06,Duan07,Calsamiglia10,Bandyopadhyay11,Halder18,Halder19(1),Agrawal19,Bhattacharya20,Banik21} and in this work our study will also deal with this particular kind of nonlocal behaviour of quantum theory.} A locally indistinguishable mutually orthogonal set of states can be distinguished perfectly if entangled states are provided as resource along with LOCC. For instance, Bennett {\it et al.}'s nonlocal product basis of $(\mathbb{C}^3)^{\otimes2}$ system can be perfectly distinguished if a maximally entangled state in this Hilbert space is provided as resource. The seminal teleportation protocol \cite{Bennett93} makes the discrimination task viable. Quite surprisingly, in subsequent work, Cohen showed that a two-qutrit maximally entangled state is not necessary for perfect discrimination of this nonlocal product basis; instead, a two-qubit maximally entangled state suffices the purpose \cite{Cohen08}. Cohen's protocol offers an efficient use of the costly entangled resource in local state discrimination problem.

Recently, a stronger notion of nonlocality without entanglement phenomena is identified for multipartite quantum systems \cite{Halder19} which subsequently motivates renewed interest in constructing nonlocal product set of states for multipartite systems as well as their entanglement assisted discrimination \cite{Rout19,Zhang19,Jiang20,Shi20,Yuan20}. In this letter, we first present a set of tripartite product states which is locally indistinguishable given arbitrary amount of entanglement shared between any two of the three parties. In other words, the set remains indistinguishable even if any two of the parties come together but do not share any entanglement with the third party. Therefore, the set requires genuinely multipartite entangled resource for perfect discrimination when all the parties are spatially separated. Interestingly, we show that given a three qubit GHZ state as resource, the states can be perfectly distinguished although they live in $\mathbb{C}^4\otimes\mathbb{C}^3\otimes\mathbb{C}^3$ dimensional system. Note that the resource used here is much cheaper than teleportation based resource ($2$ copies of two-qutrit maximally entangled state in this case). In fact our protocol uses the minimal dimensional genuinely entangled resource and hence the nonlocal strength of the constructed set of states can be considered minimal. We then generalize the construction for arbitrary number of spatially separated parties and also discuss its entanglement assisted discrimination. For the $n$ partite case, the construction lives in $\mathbb{C}^{n+1}\otimes(\mathbb{C}^3)^{\otimes n-1}$. Moreover, an $n$-qubit GHZ state suffices as resource for their perfect discrimination which again turns out to be the minimal dimensional resource. Our construction also provides an operational way to compare different classes of multipartite entanglement that otherwise are incomparable under LOCC.

\section{Preliminaries} 
Although the history of quantum state discrimination dates back to early 1970's with an initial attempt to formulate information protocols using quantum optical devices \cite{Helstrom69,Holevo73,Yuen75}, local state discrimination (LSD) problem gained research interest much later \cite{Peres91,Massar95,Bennett99}. Given only one copy of the system, it asks to identify the state chosen randomly from a known ensemble of states $\{p_i,\ket{\psi_i}\}_{i=1}^m$ under the restriction that the spatially separated parties can perform only LOCC; where $\forall~i~,\ket{\psi_i}\in\otimes_{j=1}^n\mathcal{H}_{j}$ with $\mathcal{H}_{j}$ being the Hilbert space of the $j^{th}$ subsystem. In a product LSD problem, all $\ket{\psi_i}$'s are considered to be fully product states, \textit{i.e.}, $\forall~ i,~\ket{\psi_i}=\otimes_{j=1}^n\ket{\phi^j_i}$ with $\ket{\phi^j_i}\in\mathcal{H}_j$. 
\begin{definition}\label{def1}
Nonlocal product states (NPS): A set of mutually orthogonal and fully product states $\mathbb{S}:=\left\{\ket{\psi_i}\right\}_{i=1}^K\subset\otimes_{j=1}^n\mathcal{H}_{j}$ will be referred to as NPS if they cannot be perfectly distinguished under LOCC when all the parties are spatially separated. 
\end{definition}
\begin{definition}\label{def2}
Genuinely nonlocal product states (GNPS): A set of mutually orthogonal and fully product states $\mathbb{S}:=\left\{\ket{\psi_i}\right\}_{i=1}^K\subset\otimes_{j=1}^n\mathcal{H}_{j}$ will be referred to as GNPS if they cannot be locally distinguished in any possible bipartition. \end{definition}
Note that the above definition captures the strongest possible notion of nonlocality without entanglement phenomenon for multipartite systems. The states of a GNPS can neither be locally distinguished in any `$n-1$ {\it vs} $1$' bipartition nor they can be locally distinguished in any `$n-k$ {\it vs} $k$' bipartition, with arbitrary $k$ parties grouping together. Clearly every GNPS is a NPS, but the converse is not true in general. For instance, the SHIFTS UPB of $(\mathbb{C}^2)^{\otimes3}$ as constructed in \cite{Bennett99upb,Bennett99} is a tripartite NPS but not a GNPS. In this work, our primary aim is to construct GNPS and then study their entanglement assisted discrimination. Before discussing our construction, we first recall an example of bipartite NPS which is given by,    
\begin{align*}
\mathbb{S}_{Ben}\equiv\left\lbrace \topt{0}{\eta_\pm},\topt{\eta_\pm}{2},\topt{2}{\xi_\pm},\topt{\xi_\pm}{0}\right\rbrace\subset(\mathbb{C}^3)^{\otimes2},
\end{align*}
where $\ket{\eta_\pm}:=\left(\ket{0}\pm\ket{1}\right)/\sqrt{2}$ and $\ket{\xi_\pm}:=\left(\ket{1}\pm\ket{2}\right)/\sqrt{2}$. As pointed out in \cite{Bennett99}, deletion of any state form $\mathbb{S}_{Ben}$ makes the remaining set locally distinguishable, whereas if we add another orthogonal product state to it, for instance the state $\ket{1}\ket{1}$, the resulting set remains nonlocal. This fact can be further generalized. For this purpose, first note that, two sets of states $\mathbb{S}$ and $\mathbb{S}^\prime$ are called orthogonal if and only if $\braket{\phi|\phi^\prime}=0,~\forall~\ket{\phi}\in\mathbb{S},~{and}~\ket{\phi^\prime}\in\mathbb{S}^\prime$; and they will be denoted as $\mathbb{S}\perp\mathbb{S}^\prime$.
\begin{observation}\label{obs1}
Let $\mathbb{S}\subset\otimes_{j=1}^n\mathcal{H}_j$ be a multipartite NPS / GNPS. The set of states $\mathbb{A}:=\mathbb{S}\cup\mathbb{S}^\prime$ is a NPS / GNPS for any set of mutually orthogonal states $\mathbb{S}^\prime$ such that $\mathbb{S}\perp\mathbb{S}^\prime$.
\end{observation}
Proof of this observation trivially follows an argument of \textit{reductio ad absurdum}. 
If $\mathbb{A}$ were a locally distinguishable set then for every $\ket{\psi} \in \mathbb{A} $ chosen at random, it is possible to perfectly identify this state under LOCC. This should hold even when the state lies in the nonlocal set $\mathbb{S}$ which leads to a contradiction.

Given a set of states $\chi:=\{\ket{\beta}_i~|~i=1,\cdots K\}$ and another state $\ket{\alpha}$ let us define $\chi\otimes\ket{\alpha}:=\{\ket{\beta}_i\otimes\ket{\alpha}~|~i=1,\cdots K\}$. With this notation, we will now put our next observation which will be relevant in subsequent proofs.
\begin{observation}\label{obs2}
Let $\mathbb{S}\subset\otimes_{j=1}^n\mathcal{H}_j$ be a multipartite NPS / GNPS. Consider the set $\mathbb{S'}:=\mathbb{S}\otimes\ket{\phi_0}_{a_1 \cdots a_m}$, where $\ket{\phi_0}_{a_1 \cdots a_m}$ is some fully separable state with some of the subsystems $\{a_i\}$ is in possession with $i^{th}$ party. The resulting set $\mathbb{S'}$ is again an NPS / GNPS with respect to the same multipartite configuration.   
\end{observation}
Observation \ref{obs2} follows from the fact that any fully separable state can always be prepared locally \cite{Rinaldis04}.

\section{Results} 
With the aforesaid observations in hand, in the following, we first construct a tripartite GNPS. 
\begin{figure}[t!]
\begin{center}
\includegraphics[scale=0.32]{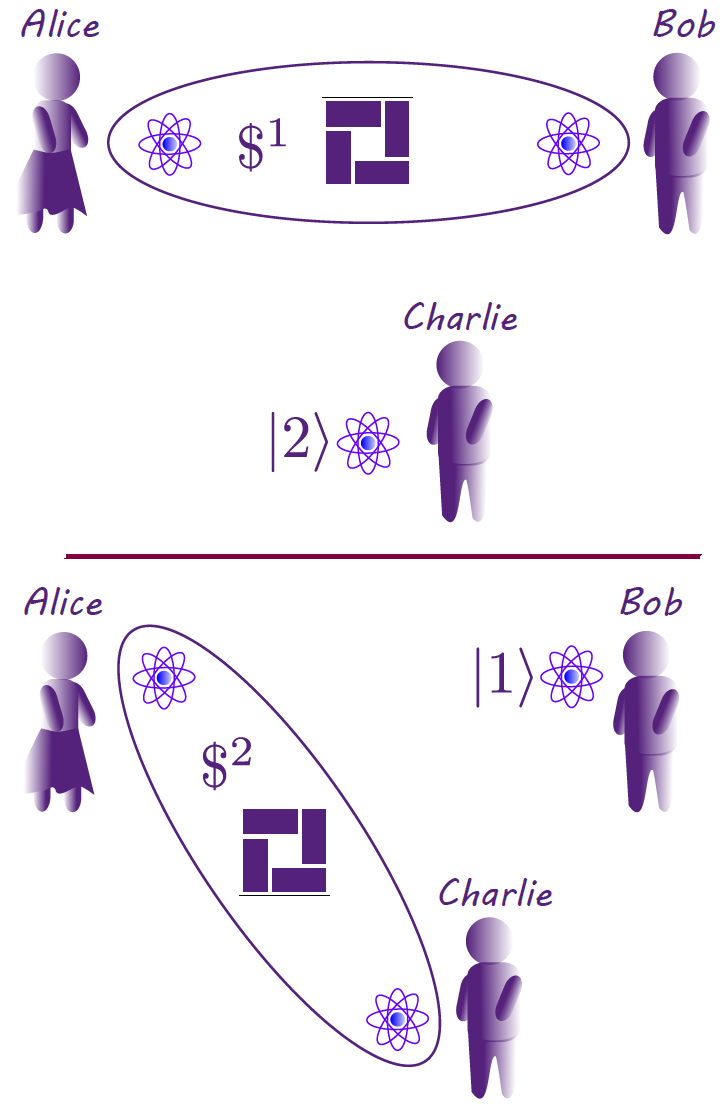}
\caption{Tripartite GNPS $\mathbb{G}[4\otimes3^{\otimes2}]$. Alice \& Bob share the set of states $\$^1\equiv\left\lbrace \ket{\zeta_{\pm}^0},\ket{\zeta^{1,1}_{\pm}},\ket{\zeta^{1,2}_{\pm}},\ket{\zeta^{1,3}_{\pm}}\right\rbrace$ while Charlie's state be $\ket{2}$. Alice \& Charlie share the set of states $\$^2\equiv\left\lbrace \ket{\zeta_{\pm}^0},\ket{\zeta^{2,1}_{\pm}},\ket{\zeta^{2,2}_{\pm}},\ket{\zeta^{2,3}_{\pm}}\right\rbrace$ while Bob's state be $\ket{1}$. Clearly, $\mathbb{G}[4\otimes3^{\otimes2}]\equiv\left\lbrace\$^1_{AB}\otimes\ket{2}_C\right\rbrace\cup\left\lbrace\$^2_{AC}\otimes\ket{1}_B\right\rbrace$.}\label{fig1}
\end{center}
\end{figure}
\begin{proposition}\label{prop1}
The set of states $\mathbb{G}[4\otimes3^{\otimes2}]$ defined below is a GNPS in $\mathbb{C}^4_A\otimes\mathbb{C}^3_B\otimes\mathbb{C}^3_C$.
\footnotesize
\begin{align}\nonumber
\rotatebox[origin=c]{0}{$\mathbb{G}[4\otimes3^{\otimes2}]\equiv$}
\left\{\!\begin{aligned}
\ket{\zeta_{\pm}^0}~&:=\tri{\epsilon_\pm}{1}{2},~~
\ket{\zeta^{1,1}_{\pm}}:=\tri{1}{\gamma_\pm^1}{2},\\
\ket{\zeta^{1,2}_{\pm}}&:=\tri{\gamma_\pm^1}{p}{2},~
\ket{\zeta^{1,3}_{\pm}}:=\tri{p}{\epsilon_\pm}{2},\\
\ket{\zeta^{2,1}_{\pm}}&:=\tri{2}{1}{\gamma_\pm^2},~
\ket{\zeta^{2,2}_{\pm}}:=\tri{\gamma_\pm^2}{1}{p},\\
&~~~~~~~\ket{\zeta^{2,3}_{\pm}}:=\tri{p}{1}{\epsilon_\pm}~
\end{aligned}\right\}.	
\end{align}
\normalsize
Here $\{\ket{p},\ket{q},\ket{1},\ket{2}\}$ are mutually orthogonal states, $\ket{\epsilon_\pm}:=\frac{1}{\sqrt{2}}(\ket{p}\pm\ket{q})$ \& $\ket{\gamma_\pm^i}:=\frac{1}{\sqrt{2}}(\ket{q}\pm\ket{i})$; $i=1,2$. 
\end{proposition}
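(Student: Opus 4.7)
The plan is first to verify that $\mathbb{G}_3$ is a set of mutually orthogonal product states by routine case analysis: the four kets $\{\ket{p},\ket{q},\ket{1},\ket{2}\}$ are orthonormal, and every superposition $\ket{\epsilon_\pm}, \ket{\gamma^1_\pm}, \ket{\gamma^2_\pm}$ is built from just two of them, so every pair of states in $\mathbb{G}_3$ ends up orthogonal on at least one party. Each state is manifestly fully product across $A|B|C$, so the remaining task is to establish genuine nonlocality, namely LOCC-indistinguishability across each of the three bipartitions $A|BC$, $B|AC$ and $C|AB$. For this, the strategy I would adopt is to locate, inside $\mathbb{G}_3$, an eight-state subset which---after factoring out a common one-party ket---is a relabelled copy of Bennett \emph{et al.}'s NPS $\mathbb{S}_{Ben}$ of Eq.~\eqref{b2} on two $3$-dimensional subsystems, and then to chain Observations \ref{lem2} and \ref{lem1}.

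Concretely, for the cut $A|BC$ I would take the subset $\{\ket{\zeta^0_\pm}, \ket{\zeta^{1,j}_\pm}\}_{j=1,2,3}$; all eight of these states carry $\ket{2}_C$, and under the identification $\ket{p}\leftrightarrow\ket{0}$, $\ket{q}\leftrightarrow\ket{1}$, $\ket{1}\leftrightarrow\ket{2}$ their $AB$-restriction reproduces $\mathbb{S}_{Ben}$ on $\mathrm{span}\{\ket{p},\ket{q},\ket{1}\}\subset\mathbb{C}^4_A$ and $\mathbb{C}^3_B$. Assigning $\ket{2}_C$ to the joint $BC$ party, Observation \ref{lem2} upgrades this to an NPS across $A|BC$, and Observation \ref{lem1}---applied using the pairwise orthogonality already established---extends the indistinguishability from the subset to the full $\mathbb{G}_3$. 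For $B|AC$ the same eight-state subset works, except that now $\ket{2}_C$ is absorbed into the joint $AC$ party and the residual $BA$ structure realises $\mathbb{S}_{Ben}$ up to a swap of the two parties (which leaves nonlocality invariant). For $C|AB$ I would instead take $\{\ket{\zeta^0_\pm}, \ket{\zeta^{2,j}_\pm}\}_{j=1,2,3}$, whose common factor $\ket{1}_B$ is absorbed into the joint $AB$ party; factoring it out recovers $\mathbb{S}_{Ben}$ on $A$ and $C$ via the analogous relabeling $\ket{p}\leftrightarrow\ket{0}$, $\ket{q}\leftrightarrow\ket{1}$, $\ket{2}\leftrightarrow\ket{2}$.

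The argument then finishes by three parallel applications of the same \emph{Observation \ref{lem2} then Observation \ref{lem1}} template, one per bipartition. The only real obstacle is bookkeeping: one must carefully track which party each single-party ket sits on across each bipartition, and verify that the chosen eight-state subset really does reproduce $\mathbb{S}_{Ben}$ under the claimed relabeling (possibly up to a swap of the two parties, which preserves nonlocality). Once these identifications are made explicit, the remainder of the proof is mechanical.
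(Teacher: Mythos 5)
Your proposal is correct and follows essentially the same route as the paper: it isolates the two eight-state subsets $\left\lbrace \ket{\zeta^0_\pm},\ket{\zeta^{1,j}_\pm}\right\rbrace$ and $\left\lbrace \ket{\zeta^0_\pm},\ket{\zeta^{2,j}_\pm}\right\rbrace$, identifies each as a relabelled $\mathbb{S}_{Ben}$ between two parties with a fixed tagged state on the third, and then invokes Observations \ref{lem2} and \ref{lem1} to cover all three bipartitions. Your version merely spells out the per-cut bookkeeping more explicitly than the paper does.
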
 
\begin{proof}
Consider the subset of states $\$^i\equiv\left\lbrace \ket{\zeta_{\pm}^0},\ket{\zeta^{i,1}_{\pm}},\ket{\zeta^{i,2}_{\pm}},\ket{\zeta^{i,3}_{\pm}}\right\rbrace\subset\mathbb{G}[4\otimes3^{\otimes2}]$, for $i\in\{1,2\}$. The set $\$^1$ has {analogous structure as the set $\mathbb{S}_{Ben}$ (with a notation change, $p\rightarrow 0, q\rightarrow 1, 1\rightarrow 2$)} between Alice and Bob while Charlie has the fixed state $\ket{2}$ (see Fig.\ref{fig1}). This, along with the Observations \ref{obs1} \& \ref{obs2}, assure that the set $\mathbb{G}[4\otimes3^{\otimes2}]$ cannot be locally discriminated even when Charlie groups with either Alice or Bob. Similarly the set $\$^2$ prohibits perfect local discrimination of $\mathbb{G}[4\otimes3^{\otimes2}]$ even when Alice and Bob are grouped together. This completes the proof. \end{proof}

At this point, a pertinent question is how to quantify the amount of `genuine nonlocality without entanglement' for a given GNPS? Note that, given sufficient amount of entanglement among the spatially separated parties any GNPS can be perfectly distinguished. For instance discrimination of $\mathbb{G}[4\otimes3^{\otimes2}]$ is possible given $2$ copies of two-qutrit maximally entangled states -- one shared between Alice-Bob and the other between Alice-Charlie. Since entanglement is a costly resource it is therefore relevant to go for an cost efficient discrimination protocol. Given two GNPSs a natural ordering of their strength of `genuine nonlocality without entanglement' can be made from the amount of entanglement required for their perfect discrimination. While the teleportation based discrimination of the $\mathbb{G}[4\otimes3^{\otimes2}]$ required $2$ copies of two-qutrit maximally entangled states that lives in the Hilbert space $\mathbb{C}^9_A\otimes\mathbb{C}^3_B\otimes\mathbb{C}^3_C$, we will now discuss a much cost efficient discrimination protocol. In particular we will show that an entanglement resource living in the Hilbert space $(\mathbb{C}^2)^{\otimes3}$ will suffices for perfect discrimination of the set. 
\begin{theorem}\label{theo1}
The set of states $\mathbb{G}[4\otimes3^{\otimes2}]$ can be perfectly discriminated locally when the state $\ket{g_3}:=(\ket{000}+\ket{111})/\sqrt{2}$ is shared as resource.  
\end{theorem}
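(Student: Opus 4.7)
The plan is to leverage the decomposition $\mathbb{G}_3 = \$^1 \cup \$^2$ (with shared element $\{\ket{\zeta^0_\pm}\}$) that was already exploited in Proposition~\ref{prop1}. Each $\$^i$ has precisely the structure of Bennett's set $\mathbb{S}_{Ben}$ on one bipartite cut, tensored with a fixed product state on the third party: on $\$^1$ the action sits on the Alice--Bob cut with Charlie holding $\ket{2}$, and on $\$^2$ the action sits on the Alice--Charlie cut with Bob holding $\ket{1}$. By Lemma~\ref{lemacohen} (Cohen), a single qubit Bell pair on the active cut is sufficient to discriminate $\mathbb{S}_{Ben}$ locally and perfectly. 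The crucial fact about $\ket{g_3}$ is that it collapses to an EPR pair on any two of the three parties once the third measures his qubit in the $X$-basis, up to a Pauli correction that can be absorbed locally. Thus $\ket{g_3}$ can serve as the required Bell pair on either the $AB$ or the $AC$ cut, depending on which party acts as the ``tracer.''

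With this in hand, I would describe an explicit protocol in three stages. First, Bob and Charlie perform coarse-grained projections on their main qutrits ($\{\ket{1},\ket{1}^\perp\}$ for Bob and $\{\ket{2},\ket{2}^\perp\}$ for Charlie) and broadcast the outcomes classically; these outcomes identify which of $\$^1$, $\$^2$, or the shared set $\{\ket{\zeta^0_\pm}\}$ the unknown state belongs to. Second, based on this classical side information, exactly one of Bob or Charlie measures his own GHZ qubit in the $X$-basis, thereby leaving an EPR pair shared between Alice and the remaining active party. Third, Alice together with that party runs Cohen's protocol on the induced Bell pair to finish discriminating the appropriate copy of $\mathbb{S}_{Ben}$. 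The shared states $\ket{\zeta^0_\pm}$ are easy to handle separately: since they differ only in Alice's qutrit via $\ket{\epsilon_\pm}$, Alice can read off $\pm$ by a local measurement in $\{\ket{\epsilon_+},\ket{\epsilon_-},\ket{1},\ket{2}\}$ with no entanglement consumed.

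The main obstacle is that naive rank-1 projections destroy the phase information of the $\ket{\gamma^1_\pm}$ and $\ket{\gamma^2_\pm}$ branches (states $\ket{\zeta^{1,1}_\pm}$ and $\ket{\zeta^{2,1}_\pm}$), because projecting Bob onto $\ket{1}$ collapses both $\ket{\gamma^1_+}$ and $\ket{\gamma^1_-}$ to the same post-measurement vector up to a global phase (and analogously for Charlie on $\ket{2}$). To overcome this, the coarse-grained projections of Stage~1 must be executed coherently through the GHZ, by first applying a CNOT-type coupling between each main qutrit and its own GHZ qubit so that the $\pm$ phase is transferred into the tripartite ancilla rather than being lost. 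When Cohen's protocol is subsequently run on the resulting Bell pair in Stage~3, the absorbed phase appears as a correctable Pauli on Alice's side and is read off inside the Cohen subroutine.

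What remains is routine bookkeeping: enumerating each measurement branch, verifying that the residual Alice--Bob (respectively Alice--Charlie) joint state after Stages~1--2 is precisely an element of $\mathbb{S}_{Ben}$ tensored with the prescribed Bell state, and checking that Cohen's protocol then uniquely pinpoints one of the fourteen elements of $\mathbb{G}_3$. I expect the phase-preservation step to be the only genuinely delicate part of the argument, and would present it as the core of the proof.
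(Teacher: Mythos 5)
Your overall strategy---treating $\mathbb{G}_3$ as two interlocking copies of $\mathbb{S}_{Ben}$ and using the GHZ state to run a Cohen-style twist-breaking protocol---is the right one, and you have correctly located the danger point (the $\ket{\gamma^1_\pm}$, $\ket{\gamma^2_\pm}$ branches). But your three-stage protocol has a genuine gap exactly there, and the fix you sketch is deferred rather than supplied. Concretely: the Stage-1 outcomes do \emph{not} identify which of $\$^1$, $\$^2$, or $\{\ket{\zeta^0_\pm}\}$ the state belongs to. The outcome pair (Bob sees $\ket{1}$, Charlie sees $\ket{2}$) occurs with nonzero probability for all of $\ket{\zeta^0_\pm}$, $\ket{\zeta^{1,1}_\pm}$ and $\ket{\zeta^{2,1}_\pm}$, since $\ket{\gamma^1_\pm}$ overlaps $\ket{1}_B$ and $\ket{\gamma^2_\pm}$ overlaps $\ket{2}_C$. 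Hence Stage 2---choosing \emph{which} of Bob or Charlie sacrifices his GHZ qubit to leave an EPR pair on the ``active'' cut---is ill-defined precisely in the branch that carries all the residual nonlocality. Moreover, your splits $\{\ket{1},\ket{1}^\perp\}_B$ and $\{\ket{2},\ket{2}^\perp\}_C$ cut directly across the supports of $\ket{\gamma^1_\pm}$ and $\ket{\gamma^2_\pm}$; even executed ``coherently through the GHZ,'' both parties would be depositing phase information into the single coherence $\ket{000}\pm\ket{111}$ of the ancilla, and you have not verified that this can be done with well-defined local projective measurements that preserve pairwise orthogonality. That verification is not routine bookkeeping; it is the entire content of the theorem.

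The paper's proof avoids both problems by a different ordering and a different choice of split: Alice goes first with the ancilla-correlated projection $M=\mathbb{P}[\ket{p}_A;\ket{0}_a]+\mathbb{P}[(\ket{q},\ket{1},\ket{2})_A;\ket{1}_a]$, whose cut $\{p\}$ versus $\{q,1,2\}$ crosses only the support of $\ket{\epsilon_\pm}$ (the one pair whose phase sits on Alice), so that pair alone becomes entangled with the ancilla while every other state collapses the ancilla to $\ket{000}$ or $\ket{111}$ with its own $\pm$ phase untouched. Bob's and Charlie's subsequent projections are correlated with their ancilla qubits and never cut across any surviving $\pm$ support; they merely sort the ensemble into orthogonal pairs, after which the result of Walgate et al.\ finishes the job with no further entanglement (no appeal to Cohen's full subroutine is needed at that stage). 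If you wish to keep your architecture, you must exhibit the explicit ancilla-correlated measurement operators for Stages 1--2 and check branch by branch that the orthogonality of every $\pm$ pair survives---at which point you will essentially have rederived the paper's protocol.
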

\begin{proof}
We will associate block letter party index with the states that need to be distinguished and denote the resource state as $\ket{g_3}_{abc}=(\ket{000}_{abc}+\ket{111}_{abc})/\sqrt{2}$. Local distinguishability of the set $\mathbb{G}[4\otimes3^{\otimes2}]$ boils down to identify the pairs $\left\lbrace \ket{\zeta_\pm}\right\rbrace $ preserving the post-measurement orthogonality between $\ket{\zeta_+}$ and $\ket{\zeta_-}$, as the result in \cite{Walgate00} assures local distinguishability between any two orthogonal states. The discrimination protocol proceeds as follows.\\
	{\bf Step-1:} Alice performs the measurement $\mathcal{M}\equiv\{M,\mathbb{I}-M\}$, where $M:=\mathbb{P}\left[\ket{p}_A;\ket{0}_{a}\right]+\mathbb{P}\left[\left( \ket{q},\ket{1},\ket{2}\right)_A;\ket{1}_{a}\right]$. Here, we use the notation $\mathbb{P}\left[\left(\ket{e},\ket{f},\cdots\right)_K;\left(\ket{x},\ket{y},\cdots\right)_k\right]:=\left(\ket{e}\bra{e}+\ket{f}\bra{f}+\cdots\right)_K\otimes\left(\ket{x}\bra{x}+\ket{y}\bra{y}+\cdots\right)_k$. Suppose, the projector $M$ clicks. The state $\ket{\zeta}_{ABC}\otimes\ket{g}_{abc}$ evolves to either $\ket{\zeta}_{ABC}\otimes\ket{000}_{abc}$, or $\ket{\zeta}_{ABC}\otimes\ket{111}_{abc}$, or it becomes entangled, where $\ket{\zeta}_{ABC}\in\mathbb{G}[4\otimes3^{\otimes2}]$. Complete list of the evolved states are given below,
	\footnotesize
	\begin{align}\nonumber
	\rotatebox[origin=c]{0}{}
	\left\{\!\begin{aligned}
	\left\lbrace\ket{\zeta^{1,3}_{\pm}},\ket{\zeta^{2,3}_{\pm}} \right\rbrace_{ABC}\otimes\ket{000}_{abc},\\
	\left\lbrace \ket{\zeta^{1,1}_{\pm}},\ket{\zeta^{1,2}_{\pm}},\ket{\zeta^{2,1}_{\pm}},\ket{\zeta^{2,2}_{\pm}}\right\rbrace_{ABC}\otimes\ket{111}_{abc},\\
	\ket{\zeta_{\pm}^0}_{ABC}\Rightarrow\left(\ket{p}_A\ket{000}_{abc}\pm\ket{q}_A\ket{111}_{abc}\right)\ket{1}_{B}\ket{2}_{C}
	\end{aligned}\right\},	
	\end{align}	
	\normalsize	
	{\bf Step-2:} Bob and Charlie respectively perform the measurement,
	\footnotesize
	\begin{align}\nonumber
	\rotatebox[origin=c]{0}{$\mathcal{K}\equiv$}
	\left\{\!\begin{aligned}
	K_1:=\mathbb{I}-K_2-K_3,~K_2:=\mathbb{P}\left[\ket{p}_{B};\ket{1}_{b}\right],~\\
	K_3:=\mathbb{P}\left[\left(\ket{p},\ket{q}\right)_{B};\ket{0}_{b}\right],~~~~~~~~~~~
	\end{aligned}\right\},\\\nonumber
	\rotatebox[origin=c]{0}{$\mathcal{N}\equiv$}
	\left\{\!\begin{aligned}
	N_1:=\mathbb{I}-N_2-N_3,~N_2:=\mathbb{P}\left[\ket{p}_{C};\ket{1}_{c}\right],~\\
	N_3:=\mathbb{P}\left[\left(\ket{p},\ket{q}\right)_{C};\ket{0}_{c}\right],~~~~~~~~~~~
	\end{aligned}\right\}.	
	\end{align}
	\normalsize
	If $K_3$ clicks the state is one of $\left\lbrace \ket{\zeta^{1,3}_{\pm}} \right\rbrace$, if $K_2$ clicks the state is one of $\left\lbrace \ket{\zeta^{1,2}_{\pm}}\right\rbrace $, if $N_3$ clicks the state is one of $\left\lbrace \ket{\zeta^{2,3}_{\pm}} \right\rbrace$, and if $N_2$ clicks the state is one of $\left\lbrace \ket{\zeta^{2,2}_{\pm}}\right\rbrace $. When both $K_1$ and $N_1$ click the state is one of $\left\lbrace \ket{\zeta_{\pm}^0},\ket{\zeta^{1,1}_{\pm}},\ket{\zeta^{2,1}_{\pm}}\right\rbrace $. Obtaining the outcome results from Bob and Charlie, Alice performs the following measurement,
	\footnotesize
	\begin{align}\nonumber
	\rotatebox[origin=c]{0}{$\mathcal{M}^\prime\equiv$}
	\left\{\!\begin{aligned}
	M^\prime_1:=\mathbb{P}\left[\ket{1}_{A};\mathbb{I}_{a}\right],~
	M^\prime_2:=\mathbb{P}\left[\ket{2}_{A};\mathbb{I}_{a}\right],\\
	M^\prime_0:=\mathbb{I}-M^\prime_1-M^\prime_2.~~~~~~~~~~~~
	\end{aligned}\right\}.	
	\end{align}
	\normalsize
	If $M^\prime_1$ clicks the state is one of $\left\lbrace \ket{\zeta^{1,1}_{\pm}}\right\rbrace  $, if $M^\prime_2$ clicks the state is one of $\left\lbrace \ket{\zeta^{2,1}_{\pm}}\right\rbrace $, else it is one of  $\left\lbrace \ket{\zeta_{\pm}^0}\right\rbrace$. If $\mathbb{I}-M$ clicks in {\bf Step-1} then a similar protocol will follow. 
\end{proof}
Theorem \ref{theo1}, for the first time establishes nontrivial and efficient use of the three-qubit GHZ state in product state discrimination problem under LOCC. The GNPS in Proposition \ref{prop1} therefore possesses minimal genuine nonlocality without entanglement as the entangled resource required for its perfect discrimination lives in minimal dimensional Hilbert space. Although the discrimination resource is minimal in the sense of Hilbert space dimension, here a question still remains open whether a state $\alpha\ket{000}+\beta\ket{111}\in(\mathbb{C}^2)^{\otimes3}$ with $\alpha\neq\beta$ suffices perfect discrimination of the set $\mathbb{G}[4\otimes3^{\otimes2}]$ \footnote{Our intuition, in-fact, eventuates from the study of Cohen's work \cite{Cohen08}. Following his technique, local distinguishability of the set $\mathcal{S}_{Ben}$ can be analyzed with the resource state $\alpha\ket{00}+\beta\ket{11}~(\alpha\neq\beta)$, which seems not to provide a perfect success. However, a protocol independent proof of this conviction is not known yet.}. Such a resource is less costlier as it has less 3-tangle \cite{Coffman00} than the state with $\alpha=\beta$. Our intuition is that possibly a state with $\alpha\neq\beta$ will not be sufficient for perfect discrimination of $\mathbb{G}[4\otimes3^{\otimes2}]$. It is extremely difficult to explore all the possible LOCC protocols assisted with such a resource. Therefore, answering this question requires a protocol independent argument which we leave here as an open question for future research.            
  
We now move on to some other consequence of the above construction. Note that, the following set of $22$ orthonormal product states 
\footnotesize
\begin{align}\nonumber
\rotatebox[origin=c]{0}{$\mathbb{G}^C[4\otimes3^{\otimes2}]\equiv$}
\left\{\!\begin{aligned}
\ket{qq2},~\ket{qqq},~\ket{qqp},~\ket{q1q},~\ket{qpq},~\ket{ppp},\\
\ket{1qq},~\ket{2qq},~\ket{pqq},~\ket{1qp},~\ket{2qp},~\ket{pqp},\\
\ket{1pq},~\ket{2pq},~\ket{ppq},~\ket{1pp},~\ket{2pp},~\ket{qpp},\\
\ket{11p},~\ket{11q},~\ket{2q2},~\ket{2p2}~~~~~~~~~~~~
\end{aligned}\right\}	
\end{align}
\normalsize
span the subspace orthogonal to the subspace spanned by $\mathbb{G}[4\otimes3^{\otimes2}]$, and hence the set of states $\mathbb{P}[4\otimes3^{\otimes2}]:=\mathbb{G}[4\otimes3^{\otimes2}]\cup\mathbb{G}^C[4\otimes3^{\otimes2}]$ with adequate normalization constitutes an orthonormal product basis (ONPB) for the Hilbert space $\mathbb{C}^4\otimes(\mathbb{C}^3)^{\otimes2}$; here $\ket{xyz}:=\ket{x}_A\otimes\ket{y}_B\otimes\ket{z}_C$. Manifestly, this ONPB has the property of genuine nonlocality and accordingly it constitutes a fully separable measurement that cannot be implemented even when any two parties come together. It is not hard to argue that the discriminating resource of $\mathbb{G}[4\otimes3^{\otimes2}]$ suffices for discrimination of the set $\mathbb{P}[4\otimes3^{\otimes2}]$. However, at this point, a more difficult question is how much resource is necessary for implementation of the corresponding fully separable measurement. Presently we have no idea regarding the resource requirement and welcome further research in this direction. In rest of the sections, we rather consider multipartite generalization of the above construction. 
\begin{table}[b!]
\centering
\begin{tabular}{ccccc|c}
\hline 
Alice & ~~~~~Bob-1 &~~~~~ Bob-2 &~~~~~ $\cdots$&~~~~~ Bob-m&\\ 
\hline\hline
$\ket{\epsilon_{\pm}}$ &~~~~~$\ket{1}$  &~~~~~$\ket{2}$  &~~~~~$\cdots$&~~~~~ $\ket{m}$ &$:=\ket{\zeta_{\pm}^0}$\\ 
$\ket{1}$ &~~~~~$\ket{\gamma_{\pm}^1}$  &~~~~~$\ket{2}$  &~~~~~$\cdots$&~~~~~ $\ket{m}$ &~$:=\ket{\zeta_{\pm}^{1,1}}$\\ 
$\ket{\gamma_{\pm}^1}$ &~~~~~$\ket{p}$  &~~~~~$\ket{2}$  &~~~~~$\cdots$&~~~~~ $\ket{m}$ &~$:=\ket{\zeta_{\pm}^{1,2}}$\\
$\ket{p}$ &~~~~~$\ket{\epsilon_{\pm}}$  &~~~~~$\ket{2}$  &~~~~~$\cdots$&~~~~~ $\ket{m}$ &~$:=\ket{\zeta_{\pm}^{1,3}}$\\  
$\ket{2}$ &~~~~~$\ket{1}$  &~~~~~$\ket{\gamma_{\pm}^2}$  &~~~~~$\cdots$&~~~~~ $\ket{m}$ &~$:=\ket{\zeta_{\pm}^{2,1}}$\\
$\ket{\gamma_{\pm}^2}$ &~~~~~$\ket{1}$  &~~~~~$\ket{p}$  &~~~~~$\cdots$&~~~~~ $\ket{m}$ &~$:=\ket{\zeta_{\pm}^{2,2}}$\\
$\ket{p}$ &~~~~~$\ket{1}$  &~~~~~$\ket{\epsilon_{\pm}}$  &~~~~~$\cdots$&~~~~~ $\ket{m}$ &~$:=\ket{\zeta_{\pm}^{2,3}}$\\
$\vdots$&~~~~~$\vdots$  &~~~~~$\vdots$  &~~~~~$\vdots$  &~~~~~$\vdots$  &~~~~~$\vdots$\\
$\ket{m}$ &~~~~~$\ket{1}$  &~~~~~$\ket{2}$  &~~~~~$\cdots$&~~~~~ $\ket{\gamma_{\pm}^m}$ &~$:=\ket{\zeta_{\pm}^{m,1}}$\\
$\ket{\gamma_{\pm}^m}$ &~~~~~$\ket{1}$  &~~~~~$\ket{2}$  &~~~~~$\cdots$&~~~~~ $\ket{p}$ &~$:=\ket{\zeta_{\pm}^{m,2}}$\\
$\ket{p}$ &~~~~~$\ket{1}$  &~~~~~$\ket{2}$  &~~~~~$\cdots$&~~~~~ $\ket{\epsilon_{\pm}}$ &~$:=\ket{\zeta_{\pm}^{m,3}}$\\
\hline 
\end{tabular}
\caption{Set of states $\mathbb{G}[(m+2)\otimes3^{\otimes m}]$. Here $\ket{\epsilon_{\pm}}:=\frac{1}{\sqrt{2}}(\ket{p}\pm\ket{q})$ and  $\ket{\gamma_{\pm}^i}:=\frac{1}{\sqrt{2}}(\ket{q}\pm\ket{i})$; with $i,j\in\{p,q,1,\cdots,m\}$ and $\braket{i|j}=\delta_{ij}$.}\label{tab1}
\end{table} 
\begin{proposition}\label{prop2}
Consider the set of states $\mathbb{G}[(m+2)\otimes3^{\otimes m}]\equiv\left\lbrace\ket{\zeta_{\pm}^0},\ket{\zeta_{\pm}^{i,1}},\ket{\zeta_{\pm}^{i,2}},\ket{\zeta_{\pm}^{i,3}}\right\rbrace_{i=1}^m$ given in the Table \ref{tab1}. This set is a GNPS in $\mathbb{C}^{m+2}\otimes\left(\mathbb{C}^3\right)^{\otimes m}$. Here, Alice posses the subsystem in $\mathbb{C}^{m+2}$ and each Bob has a subsystem with qutrit Hilbert space. 
\end{proposition}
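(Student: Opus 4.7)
The plan is to follow the template of the proof of Proposition~\ref{prop1} and leverage Observations~\ref{lem1} and~\ref{lem2}, now with Alice and $m$ Bobs $B_{1},\ldots,B_{m}$. From the construction tabulated in Table~\ref{tab1} I expect that, for each $i\in\{1,\ldots,m\}$, the subset
\[
\$^{i}:=\left\lbrace \ket{\zeta_{\pm}^{0}},\ket{\zeta_{\pm}^{i,1}},\ket{\zeta_{\pm}^{i,2}},\ket{\zeta_{\pm}^{i,3}}\right\rbrace \subset \mathbb{G}_{m+1}
\]
restricts on the pair $(A,B_{i})$ to the Bennett nonlocal set $\mathbb{S}_{Ben}$ (under the relabelling $0\to p$, $1\to q$, $2\to i$), while every other Bob $B_{j}$ with $j\ne i$ carries one and the same fixed ket $\ket{j}$ across all eight states of $\$^{i}$. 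Mutual orthogonality of the full set $\mathbb{G}_{m+1}$ should follow from a direct inner-product check using orthonormality of $\{\ket{p},\ket{q},\ket{1},\ldots,\ket{m}\}$ on Alice and $\{\ket{p},\ket{q},\ket{j}\}$ on $B_{j}$, just as in the tripartite case.

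The main step is to analyse an arbitrary bipartition $S\mid S^{c}$ of the $m+1$ parties and exhibit a subset of $\mathbb{G}_{m+1}$ that is already nonlocal across that cut. Place Alice in $S$ without loss of generality; since $S^{c}$ is nonempty it must contain at least one Bob $B_{i}$. For that particular $i$, Observation~\ref{lem2} allows me to absorb the fixed product state $\bigotimes_{j\ne i}\ket{j}_{B_{j}}$ into whichever sides of $S\mid S^{c}$ the corresponding parties happen to lie on, so that $\$^{i}$ is equivalent, for the purpose of local discrimination across $S\mid S^{c}$, to $\mathbb{S}_{Ben}$ shared between the group containing Alice and the group containing $B_{i}$. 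Hence $\$^{i}$ is itself an NPS across $S\mid S^{c}$. Observation~\ref{lem1}, applied to the orthogonal decomposition $\mathbb{G}_{m+1}=\$^{i}\cup(\mathbb{G}_{m+1}\setminus \$^{i})$, then promotes this to nonlocality of the full set $\mathbb{G}_{m+1}$ across the same cut. Running this argument over all bipartitions establishes the GNPS property.

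The only conceptual ingredient beyond Proposition~\ref{prop1} is the simple observation that any bipartition of $m+1$ parties automatically places Alice on the opposite side of \emph{at least one} Bob $B_{i}$, which is exactly the handle that Observation~\ref{lem2} needs; this is why the construction requires only one Bennett-type block per Bob and why no obstruction appears when scaling to arbitrary $m$. The remaining tasks---verifying pairwise orthogonality of $\mathbb{G}_{m+1}$ and confirming that Table~\ref{tab1} really does produce the Bennett block between Alice and $B_{i}$ with the claimed fixed states on the other Bobs---are routine bookkeeping generalisations of the calculations already carried out in the tripartite proof, and I expect them to be the most tedious but least surprising part of the write-up.
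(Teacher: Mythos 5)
Your proposal is correct and takes essentially the same route as the paper: for each $i$ identify the Bennett-type block $\$^i\equiv\{\ket{\zeta_{\pm}^0},\ket{\zeta^{i,1}_{\pm}},\ket{\zeta^{i,2}_{\pm}},\ket{\zeta^{i,3}_{\pm}}\}$ sitting between Alice and Bob-$i$ with fixed tagged states on the remaining Bobs, then invoke Observations~\ref{lem1} and~\ref{lem2} to conclude indistinguishability across every cut. If anything, your explicit quantification over arbitrary bipartitions (observing that Alice is always separated from at least one $B_i$) is a slightly more careful rendering of the paper's statement that the set ``cannot be locally discriminated even if any $m$ parties come together.''
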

\begin{proof}
Consider the subset of states $\$^i\equiv\left\lbrace \ket{\zeta_{\pm}^0},\ket{\zeta^{i,1}_{\pm}},\ket{\zeta^{i,2}_{\pm}},\ket{\zeta^{i,3}_{\pm}}\right\rbrace\subset\mathbb{G}[(m+2)\otimes3^{\otimes m}]$, for $i\in\{1,\cdots,m\}$. The set $\$^i$ has similar structure as of the set $\mathbb{S}_{Ben}$ between Alice and $i^{th}$ Bob while other Bobs have fixed states tagged with this set (see Fig. \ref{fig2}). This, along with the Observation \ref{obs1} and Observation \ref{obs2}, assure that the set $\mathbb{G}[(m+2)\otimes3^{\otimes m}]$ cannot be locally distinguished in any bipartition.
\end{proof}
\begin{figure}[t!]
\begin{center}
\includegraphics[scale=0.5]{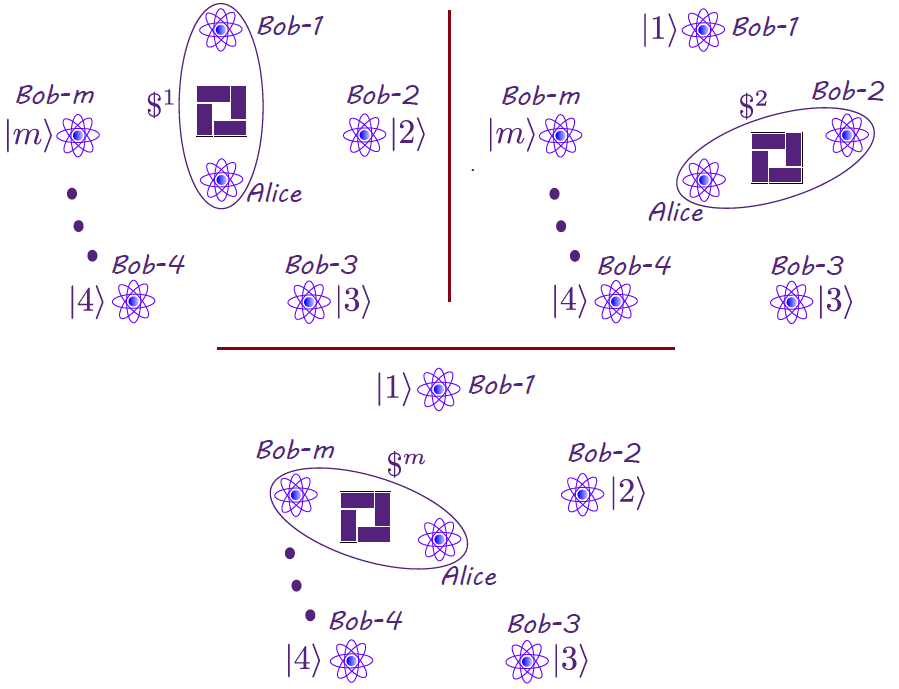}
\caption{Multiparty GNPS $\mathbb{G}[(m+2)\otimes3^{\otimes m}]$. Alice \& $i^{th}$ Bob share the set of states $\$^i\equiv\left\lbrace \ket{\zeta_{\pm}^0},\ket{\zeta^{i,1}_{\pm}},\ket{\zeta^{i,2}_{\pm}},\ket{\zeta^{i,3}_{\pm}}\right\rbrace$ while $j$th Bob's state be $\ket{j}$, $j\ne i$. Clearly, $\mathbb{G}[(m+2)\otimes3^{\otimes m}]\equiv\bigcup_i\left\lbrace\bigotimes_{j\ne i}\ket{j}_{B_j}\otimes\$^i_{AB_i}\right\rbrace$.}\label{fig2}
\end{center}
\end{figure}
Our next result addresses entanglement assisted discrimination of the set $\mathbb{G}[(m+2)\otimes3^{\otimes m}]$.   
\begin{theorem}\label{theo2}
The set of states $\mathbb{G}[(m+2)\otimes3^{\otimes m}]$ can be perfectly discriminated locally given the genuine resource state $\ket{g_{m+1}}_{ab_1\cdots b_m}:=\frac{1}{\sqrt{2}}\left(\ket{0^{\otimes m+1}}+\ket{1^{\otimes m+1}}\right)_{ab_1\cdots b_m}$. 
\end{theorem}  
This proof follows straightforwardly by generalizing the discrimination strategy discussed in the proof of Theorem \ref{theo1}. For completeness we provide the proof in Appendix. It is important to note that, here also the discriminating resource lives in the minimal Hilbert space dimension, {\it i.e.} in $(\mathbb{C}^2)^{\otimes m+1}$, and hence the genuine nonlocality of the GNPS $\mathbb{G}[(m+2)\otimes3^{\otimes m}]$ can be considered minimal. 

We will now discuss another important implication of our construction. In multipartite scenario, one of the most pertinent problems is the resource comparison among different entangled states. One possible way is to check the possible  inter-conversion between two states under LOCC. However, there exist states that are not comparable in this sense. For instance, considered the states $\ket{g_3}\in(\mathbb{C}^2)^{\otimes3}$ and $\ket{\psi}\equiv\ket{\chi}\otimes\ket{\eta}$, with $\ket{\chi}\in\mathbb{C}^{d_1}\otimes\mathbb{C}^{d_2}$ having Schmidt rank greater than two and $\ket{\eta}\in\mathbb{C}^{d_3}$. Clearly, $\ket{g_3}$ being a genuine entangled state cannot be obtained from the biseparable state $\ket{\psi}$ under LOCC. On the other hand, neither a deterministic \cite{Nielsen99} nor a probabilistic \cite{Vidal99} transformation form the state $\ket{g_3}$ to the state $\ket{\psi}$ is possible even if entanglement of $\ket{\psi}$ is strictly less than unity \cite{Self1}. At this point some task ($\tau$) based ordering relation $\succ_\tau$ might be of interest. We will say the state $\rho$ is better than the state $\sigma$ in performing the task $\tau$, if the task can be perfectly done with the state $\rho$ as a resource but not with $\sigma$ and hence it induces an operational ordering between the states represented as $\rho\succ_\tau\sigma$. In that sense, our construction suggests the following ordering relation. 
\begin{corollary}\label{coro1}
The task ($\tau_m$) of entanglement assisted discrimination of the set $\mathbb{G}[(m+2)\otimes3^{\otimes m}]$ induces the ordering relation $\ket{g_{m+1}}\succ_{\tau_m}\rho:=\sum p_i\chi^i\otimes\eta^i$, where $\forall~i,~ \chi^i\in\mathcal{D}(\otimes_{j=1}^m\mathcal{H}_j)$ and $\eta^i\in\mathcal{D}(\mathcal{H})$ with $\mathcal{H}_j$'s and $\mathcal{H}$ having arbitrary dimension; and $p_i\ge0,~\sum p_i=1$. 
\end{corollary}
Here, $\mathcal{D}(X)$ denotes the set of density operator acting on $X$. Note that the state $\rho$ can have at-most $m$ partite entanglement which makes the proof of Corollary \ref{coro1} immediate. Consider now the tripartite resource state $\ket{\psi_3}:=\ket{g_3}_{abc}^{\otimes2}$, {\it i.e.} two copies of three-qubit GHZ state shared among three parties; and the state $\ket{\phi_3}:=\ket{\phi^+}_{ab}\otimes\ket{\phi^+}_{bc}\otimes\ket{\phi^+}_{ca}$, {\it i.e.} three copies of two-qubit maximally entangled state $\ket{\phi^+}:=\frac{1}{\sqrt{2}}(\ket{00}+\ket{11})$ symmetrically shared among three parties. Both $\ket{\psi_3}$ and $\ket{\phi_3}$ contain tripartite genuine entanglement and both the states have same single party marginal. Moreover, these two resources are incomparable under LOCC \cite{Bennett00}; in-fact it is not possible to convert $2N$ three-party GHZ states into $3N$ singlets even in an asymptotic sense \cite{Linden05}. At this point, consider the task $\tau^\star$ of distinguishing the ordered pair of states $(\ket{\zeta_i},\ket{\zeta_j})$ chosen randomly from the Cartesian product set $\mathbb{G}[4\otimes3^{\otimes2}]\times\mathbb{G}[4\otimes3^{\otimes2}]$. Our next result, brings a bona fide ordering between the locally incomparable genuine resource states $\ket{\psi_3}$ and $\ket{\phi_3}$. 
\begin{corollary}\label{coro2}
The tripartite product state discrimination problem $\tau^\star$ induces the ordering relation $\ket{\psi_3}\succ_{\tau^\star}\ket{\phi_3}$. 
\end{corollary}
\begin{proof}
The task $\tau^\star$ considers discrimination of the ordered tuple $\left( \ket{\zeta_i},\ket{\zeta_j}\right)$ chosen randomly from $\mathbb{G}[4\otimes3^{\otimes2}]\times\mathbb{G}[4\otimes3^{\otimes2}]$. Clearly, the task cannot be done under LOCC. An additional resource $\ket{\phi_3}$ also fails to achieve the desired objective perfectly. The set $\mathbb{G}[4\otimes3^{\otimes2}]$ being a GNPS necessitates consumption of at least two of the three symmetrically distributed EPR states for perfect discrimination of the first element of the ordered pair $\left( \ket{\zeta_i},\ket{\zeta_j}\right)$. Since, identification of the first element does not provide any information regarding the second, therefore it cannot be perfectly discriminated using the remaining one EPR state. However, given the resource $\ket{\psi_3}$, two copies of three qubit GHZ, the players can use the first and second copy respectively to perfectly identify $\ket{\zeta_i}$ and $\ket{\zeta_j}$. This can be done by following the protocol discussed in Theorem \ref{theo1}. This completes the proof.  
\end{proof}
Furthermore, following the construction of bipartite unextendible product bases of Ref. \cite{Halder19(1)}, the present construction can be further generalized for higher dimensional Hilbert spaces. For the explicit construction we refer to the Appendix. There we construct a GNPS in $\mathbb{C}^6\otimes(\mathbb{C}^5)^{\otimes2}$. It might be interesting to see whether a resource efficient discrimination protocol is possible for this set.    

\section{Discussions}
We have constructed genuinely nonlocal product bases for arbitrary many number of parties. We then argued that the strength of nonlocality of those sets can be considered minimal as they require entangled resource of minimal dimension for their perfect discrimination. The constructions also lead to fully separable measurements whose implementation require either all the parties to come together or they need to share some multipartite resource that contains entanglement across all possible bipartite cuts. 

Our study also motivates some interesting questions for further research. While we have considered entangled resource of minimal dimension, the question remain open which particular entangled state in this minimum dimensional Hilbert space turns out to be the optimal resource. In this respect, constructing a tripartite GNPS that can be perfectly distinguished with the resource of $3$-qubit W state might be of particular interest. 

\section*{Acknowledgments} 
SR acknowledges partial support by the Foundation for Polish Science (IRAP project, ICTQT, contract no. MAB/2018/5, co-financed by EU within Smart Growth Operational Programme). MB acknowledges support through INSPIRE Faculty fellowship by the Department of Science and Technology, Government of India. 

\onecolumngrid 
\appendix
\onecolumngrid 
\section{Proof of Theorem \ref{theo2}}
\begin{proof}
In {\bf Step-1} Alice performs the measurement $\mathcal{M}\equiv\{M,\mathbb{I}-M\}$, where
\begin{eqnarray*}
M:=\mathbb{P}\left[\ket{p}_A;\ket{0}_{a}\right]+\mathbb{P}\left[\left( \ket{q},\ket{1},\cdots,\ket{m}\right)_A;\ket{1}_{a}\right].
\end{eqnarray*}
The evolved states are given by,
\begin{align}\nonumber
\rotatebox[origin=c]{0}{}
\left\{\!\begin{aligned}
\left\lbrace\ket{\zeta^{1,3}_{\pm}},\cdots,\ket{\zeta^{m,3}_{\pm}} \right\rbrace\otimes\ket{0^{\otimes m+1}}_{ab_1\cdots b_m},\\
\left\lbrace \ket{\zeta^{1,1}_{\pm}},\ket{\zeta^{1,2}_{\pm}},\cdots,\ket{\zeta^{m,1}_{\pm}},\ket{\zeta^{m,2}_{\pm}}\right\rbrace\otimes\ket{1^{\otimes m+1}}_{ab_1\cdots b_m},\\
\ket{\zeta_{\pm}^0}\Rightarrow\ket{\tilde{\zeta}^0_{\pm}}~~~~~~~~~~~~~~~~~~ 
\end{aligned}\right\},	
\end{align}
where, 
\begin{eqnarray}
\ket{\tilde{\zeta}^0_{\pm}}&=&\left(\ket{p}_A\ket{0^{\otimes m+1}}_{ab_1\cdots b_m}\pm\ket{q}_A\ket{1^{\otimes m+1}}_{ab_1\cdots b_m}\right)\otimes\ket{1}_{B_1}\cdots\ket{m}_{B_m}.
\end{eqnarray}
{\bf Step-2:} $i^{th}$ Bob performs the similar measurement as in Theorem $1$. If $K^i_3$ clicks the state is one of $\left\lbrace \ket{\zeta^{i,3}_{\pm}} \right\rbrace$, if $K^i_2$ clicks the state is one of $\left\lbrace \ket{\zeta^{i,2}_{\pm}}\right\rbrace$, if all $K^i_1$'s click the state is one of $\left\lbrace \ket{\zeta_{\pm}^0},\ket{\zeta^{i,1}_{\pm}}\right\rbrace_{i=1}^m $. Alice then performs the measurement,
\begin{align}\nonumber
\rotatebox[origin=c]{0}{$\mathcal{M}^\prime\equiv$}
\left\{\!\begin{aligned}
M^\prime_1:=\mathbb{P}\left[\ket{1}_{A};\mathbb{I}_{a}\right],\cdots
M^\prime_m:=\mathbb{P}\left[\ket{m}_{A};\mathbb{I}_{a}\right],\\
M^\prime_0:=\mathbb{I}-\left(M^\prime_1+\cdots+M^\prime_2\right) .~~~~~~~~~~~~
\end{aligned}\right\}.	
\end{align}
If $M^\prime_i$ clicks the state is one of $\left\lbrace \ket{\zeta^{i,1}_{\pm}}\right\rbrace $, else it is one of  $\left\lbrace \ket{\zeta_{\pm}^0}\right\rbrace$.  Now the result in \cite{Walgate00} assures local distinguishability between any two orthogonal states. 
\end{proof}

\section{Construction of GNPS in higher dimensional Hilbert spaces}
The NPS $\mathbb{S}_{Ben}\subset\mathbb{C}^3\otimes\mathbb{C}^3$ can be expressed in the following generic form,
\begin{equation}\label{b2qg}
\mathbb{S}_{Ben}\equiv\mathbb{S}_{Ben}[3\otimes3]\equiv\left\lbrace \topt{a}{s_\pm},\topt{s_\pm}{c},\topt{c}{t_\pm},\topt{t_\pm}{a}\right\rbrace,
\end{equation}
where $\{\ket{a},\ket{b},\ket{c}\}$ are pairwise orthonormal states and $\ket{s_\pm}:=\frac{1}{\sqrt{2}}\left(\ket{a}\pm\ket{b}\right)$ and $\ket{t_\pm}:=\frac{1}{\sqrt{2}}\left(\ket{b}\pm\ket{c}\right)$. A generalization of $\mathbb{S}_{Ben}[3\otimes3]$ in $\mathbb{C}^5\otimes\mathbb{C}^5$ is given by,
\begin{align}\label{b4qg}
\rotatebox[origin=c]{0}{$\mathbb{S}_{Ben}[5\otimes5]\equiv$}
\left\{\!\begin{aligned}
\ket{\Gamma^1_\pm}:=\topt{a}{s_\pm},~~\ket{\Gamma^2_\pm}:=\topt{s_\pm}{c},~~\\
\ket{\Gamma^3_\pm}:=\topt{c}{t_\pm},~~\ket{\Gamma^4_\pm}:=\topt{t_\pm}{a},~~\\
\ket{\Gamma^5_{ijk}}:=\topt{d}{u_{ijk}},\ket{\Gamma^6_{ijk}}:=\topt{u_{ijk}}{e},\\
\ket{\Gamma^7_{ijk}}:=\topt{e}{v_{ijk}},\ket{\Gamma^8_{ijk}}:=\topt{v_{ijk}}{d}~~
\end{aligned}\right\},	
\end{align} 
where $\{\ket{a},\ket{b},\ket{c},\ket{d},\ket{e}\}$ is an orthonormal basis of $\mathbb{C}^5$ and $\ket{u_{ijk}}\in\mathcal{S}_{abcd}$ and $\ket{v_{ijk}}\in\mathcal{S}_{abce}$, with
\begin{align}\nonumber
\rotatebox[origin=c]{0}{$\bar{\mathcal{S}}_{\alpha\beta\delta\gamma}\equiv$}
\left\{\!\begin{aligned}
\ket{\alpha}+(-1)^i\ket{\beta}+(-1)^j\ket{\delta}+(-1)^k\ket{\gamma},\\
\mbox{with}~i,j,k\in\{0,1\}~and~i\oplus_2j\oplus_2k=0
\end{aligned}\right\}.	
\end{align}
$\bar{\mathcal{S}}_{\alpha\beta\delta\gamma}$ contains the unnormalized states of $\mathcal{S}_{\alpha\beta\delta\gamma}$. The NPS $\mathbb{S}_{Ben}[5\otimes5]$ has a \textit{layered} tile structure (see Fig. \ref{tile5}). This has been recently studied to understand the intricate geometrical structure of the set of bipartite states having positive partial transpose, {\it i.e.}, the {\it Peres set} \cite{Halder19(1)}. Furthermore, from Ref.\cite{Zhang20} it is evident that the set (\ref{b4qg}) can be locally distinguished if a $2$-qutrit maximally entangled state is shared as resource. Note that the protocol in \cite{Zhang20} is resource efficient compared to the teleportation based protocol as the later requires a maximally entangled state of $\mathbb{C}^5\otimes\mathbb{C}^5$

Consider now the following set of states in $\mathbb{C}^6_A\otimes\mathbb{C}^5_{B_1}\otimes\mathbb{C}^5_{B_2}$,
\begin{align}\label{c11}
\rotatebox[origin=c]{0}{$\mathbb{G}[6\otimes5^{\otimes2}]\equiv$}
\left\{\!\begin{aligned}
\ket{\Omega^1_\pm}&:=\tri{1}{\alpha_\pm}{4},~\ket{\Omega^2_\pm}:=\tri{\alpha_\pm}{3}{4},\\
\ket{\Omega^3_\pm}&:=\tri{3}{\beta_\pm}{4},~\ket{\Omega^4_\pm}:=\tri{\beta_\pm}{1}{4},\\
\ket{\Omega^5_\pm}&:=\tri{4}{3}{\gamma_\pm},~\ket{\Omega^6_\pm}:=\tri{\gamma_\pm}{3}{1},\\
\ket{\Omega^7_\pm}&:=\tri{1}{3}{\alpha_\pm},~\ket{\Omega^8_{ijk}}:=\tri{0}{\Psi_{ijk}}{3},~\\
\ket{\Omega^9_{ijk}}&:=\tri{\Psi_{ijk}}{4}{3},\ket{\Omega^{10}_{ijk}}:=\tri{4}{\Phi_{ijk}}{3},\\
\ket{\Omega^{11}_{ijk}}&:=\tri{\Phi_{ijk}}{0}{3},\ket{\Omega^{12}_{ijk}}:=\tri{5}{0}{\Phi_{ijk}},\\
\ket{\Omega^{13}_{ijk}}&:=\tri{\varUpsilon_{ijk}}{0}{5},\ket{\Omega^{14}_{ijk}}:=\tri{3}{0}{\varUpsilon_{ijk}}
\end{aligned}\right\},	
\end{align}
where $$\ket{\alpha_\pm}:=\frac{1}{\sqrt{2}}\ket{1\pm2}, \ket{\beta_\pm}:=\frac{1}{\sqrt{2}}\ket{2\pm3}, \ket{\gamma_\pm}:=\frac{1}{\sqrt{2}}\ket{2\pm4}$$ $$\ket{\Psi_{ijk}}\in\mathcal{S}_{0123},~\ket{\Phi_{ijk}}\in\mathcal{S}_{1234},~ \ket{\varUpsilon_{ijk}}\in\mathcal{S}_{1245}.$$
\begin{figure}[t!]
\centering
\includegraphics[scale=0.20]{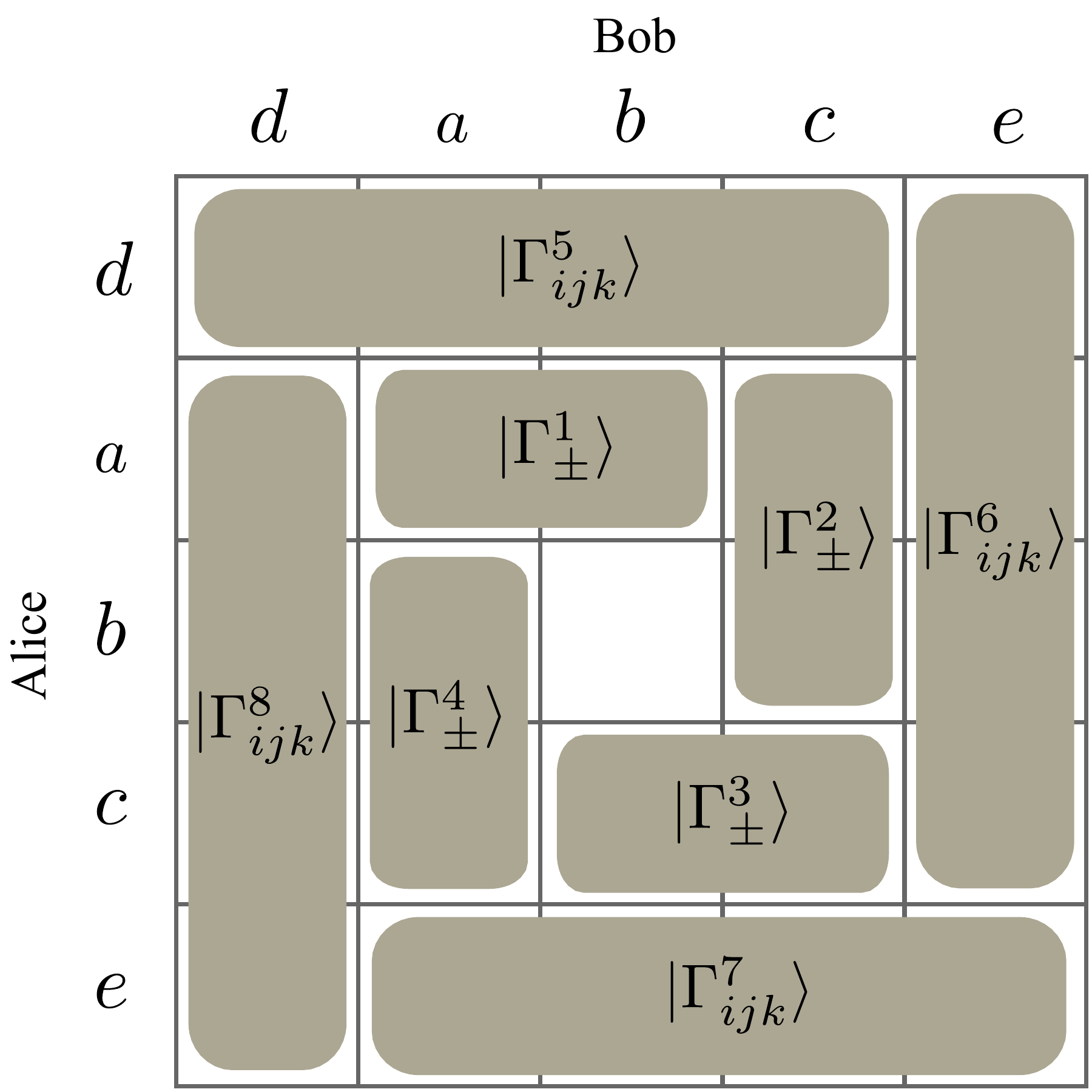}
\caption{Tile structure of the set $\mathbb{S}_{Ben}[5\otimes5]$. Cardinality of the set is $24$. Each inner layered tile contains $2$ mutually orthonormal states, while each outer layered tile contains $4$ mutually orthonormal states. Orthogonality among the states from different tiles be evident from the structure.}\label{tile5}
\end{figure}

\begin{figure}[t!]
	\centering
	\includegraphics[scale=0.2]{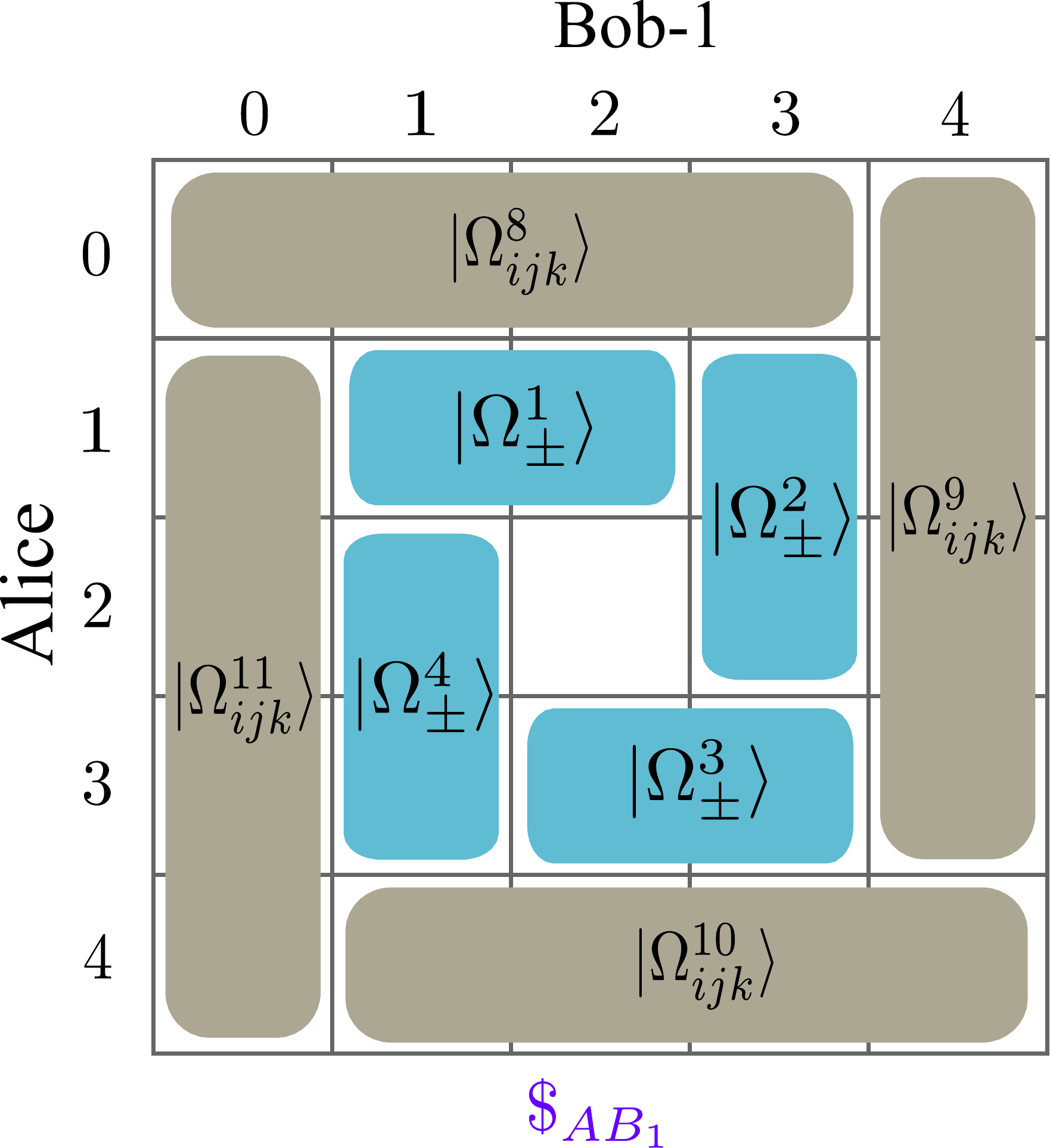}~~~~~~~~~~~~~\includegraphics[scale=0.2]{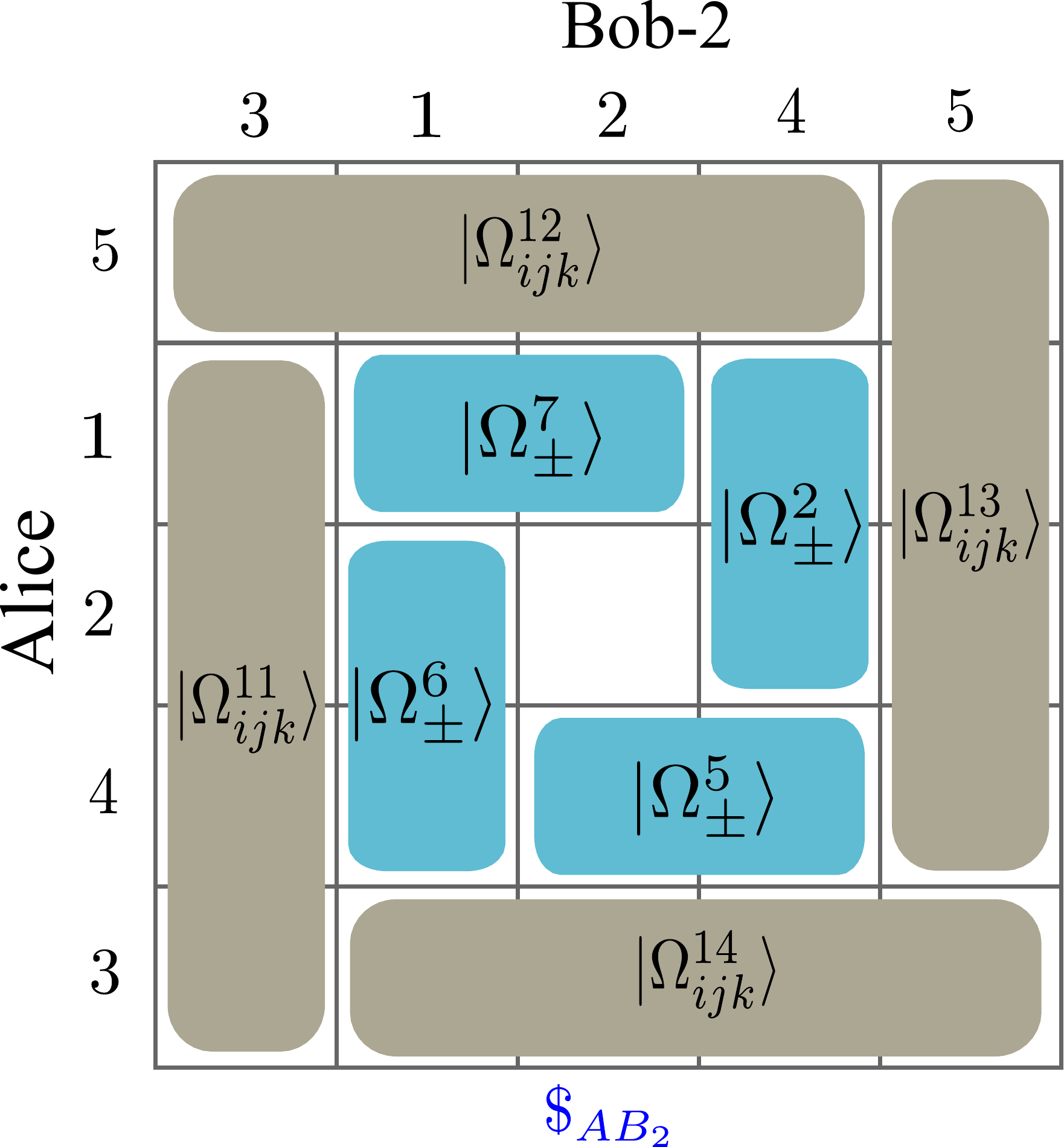}
	\caption{[Color on-line] Tile structure of the set $\$_{AB_1}$ (left) and $\$_{AB_2}$ (right). With all the states in outer layered (grey) tiles of $\$_{AB_1}$ Bob-2's state is $\ket{4}_{B_2}$, while for inner layer (blue) his state is $\ket{3}_{B_2}$. In $\$_{AB_2}$, Bob-1's state tagged with outer layer is $\ket{0}_{B_1}$ and for inner layer it is $\ket{3}_{B_1}$.}\label{tile51}
\end{figure}

Before proceeding further, let us first analyze the structure of the set $\mathbb{G}[6\otimes5^{\otimes2}]$. The subset $\$_{AB_1}\equiv\left\lbrace \ket{\Omega^1_\pm},\ket{\Omega^2_\pm},\ket{\Omega^3_\pm},\ket{\Omega^4_\pm},\ket{\Omega^8_{ijk}},\ket{\Omega^9_{ijk}},\ket{\Omega^{10}_{ijk}},\ket{\Omega^{11}_{ijk}}\right\rbrace $ has a kind of analogous structure as of (\ref{b4qg}) between Alice and Bob-1 (see Fig.\ref{tile51}). Please note here an important point: Bob-2 has the state $\ket{4}_{B_2}$ tagged with $\{\ket{\Omega^1_\pm},\ket{\Omega^2_\pm},\ket{\Omega^3_\pm},\ket{\Omega^4_\pm}\}$, while with $\{\ket{\Omega^8_{ijk}},\ket{\Omega^9_{ijk}},\ket{\Omega^{10}_{ijk}},\ket{\Omega^{11}_{ijk}}\}$ Bob-2's state $\ket{3}_{B_2}$ is tagged. Similarly, $\$_{AB_2}\equiv\left\lbrace\ket{\Omega^2_\pm},\ket{\Omega^5_\pm},\ket{\Omega^6_\pm},\ket{\Omega^7_\pm},\ket{\Omega^{11}_{ijk}},\ket{\Omega^{12}_{ijk}},\ket{\Omega^{13}_{ijk}},\ket{\Omega^{14}_{ijk}}\right\rbrace $ has a kind of similar structure as of (\ref{b4qg}) between Alice and Bob-2 with Bob-1 having the tagged state $\ket{3}_{B_1}$ with $\{\ket{\Omega^2_\pm},\ket{\Omega^5_\pm},\ket{\Omega^6_\pm},\ket{\Omega^7_\pm}\}$ and having the tagged state $\ket{0}_{B_1}$ with $\{\ket{\Omega^{11}_{ijk}},\ket{\Omega^{12}_{ijk}},\ket{\Omega^{13}_{ijk}},\ket{\Omega^{14}_{ijk}}\}$. This structure, along with the Observations $1$ and $2$ discussed in manuscript, leads us to the following proposition.\\\\
{\bf Proposition 3.} {\it The set of states $\mathbb{G}[6\otimes5^{\otimes2}]$ is a GNPS in $\mathbb{C}^6\otimes\mathbb{C}^5\otimes\mathbb{C}^5$.}

\twocolumngrid


\begin{thebibliography}{10}

\bibitem{Bennett92}	C.H. Bennett and S.J. Wiesner; Communication via one- and two-particle operators on Einstein-Podolsky-Rosen states,
\href{https://doi.org/10.1103/PhysRevLett.69.2881}{Phys. Rev. Lett. {\bf 69}, 2881 (1992)}.
	
\bibitem{Bennett93} C.H. Bennett, G. Brassard, C. Crépeau, R. Jozsa, A. Peres, and W.K. Wootters; Teleporting an unknown quantum state via dual classical and 	Einstein-Podolsky-Rosen channels, 
\href{https://doi.org/10.1103/PhysRevLett.70.1895}{Phys. Rev. Lett. {\bf 70}, 1895 (1993)}. 

\bibitem{Chiribella21} G. Chiribella, M. Banik, S. S. Bhattacharya, T. Guha, M. Alimuddin, A. Roy, S. Saha, S. Agrawal, and G. Kar; Indefinite causal order enables perfect quantum communication with zero capacity channels,
\href{https://doi.org/10.1088/1367-2630/abe7a0}{New J. Phys. {\bf 23}, 033039 (2021)}.

\bibitem{Walther04} P. Walther, J-W Pan, M. Aspelmeyer, R. Ursin, S. Gasparoni,and A. Zeilinger; De Broglie wavelength of a non-local four-photon state,
\href{https://doi.org/10.1038/nature02552}{Nature {\bf 429}, 158 (2004)}.

\bibitem{Mitchell04} M. W. Mitchell, J. S. Lundeen, and A. M. Steinberg; Super-resolving phase measurements with a multiphoton entangled state, \href{https://doi.org/10.1038/nature02493}{Nature {\bf 429}, 161 (2004)}. 

\bibitem{Joo11} J. Joo, W. J. Munro, and T. P. Spiller; Quantum Metrology with Entangled Coherent States, 
\href{https://doi.org/10.1103/PhysRevLett.107.083601}{Phys. Rev. Lett. {\bf 107}, 083601 (2011)}.

\bibitem{Giovannetti11} V. Giovannetti, S. Lloyd, and L. Maccone; Advances in quantum metrology,
\href{https://doi.org/10.1038/nphoton.2011.35}{Nature Photon {\bf 5}, 222 (2011)}. 

\bibitem{Pironio10} S. Pironio {\it et al.} Random numbers certified by Bell’s theorem,
\href{https://doi.org/10.1038/nature09008}{Nature {\bf 464}, 1021 (2010)}.

\bibitem{Colbeck12} R. Colbeck and R. Renner; Free randomness can be amplified,
\href{https://doi.org/10.1038/nphys2300}{Nature Phys {\bf 8}, 450 (2012)}. 

\bibitem{Liu21} W-Z Liu {\it et al.} Device-independent randomness expansion against quantum side information,
\href{https://doi.org/10.1038/s41567-020-01147-2}{Nat. Phys. {\bf 17}, 448 (2021)}.

\bibitem{Greenberger90} D. Greenberger, M. Horne, A. Shimony, and A. Zeilinger; Bell's theorem without inequalities,
\href{https://doi.org/10.1119/1.16243}{Am. J. Phys. {\bf 58}, 1131 (1990)}.

\bibitem{Dur00} W. Dür, G. Vidal, and J. I. Cirac; Three qubits can be entangled in two inequivalent ways,
\href{https://doi.org/10.1103/PhysRevA.62.062314}{Phys. Rev. A {\bf 62}, 062314 (2000)}. 

\bibitem{Verstraete02} F. Verstraete, J. Dehaene, B. De Moor, and H. Verschelde; Four qubits can be entangled in nine different ways,
\href{https://doi.org/10.1103/PhysRevA.65.052112}{Phys. Rev. A {\bf 65}, 052112 (2002)}.

\bibitem{Dur99} W. Dür, H.-J. Briegel, J. I. Cirac, and P. Zoller; Quantum repeaters based on entanglement purification, 
\href{https://doi.org/10.1103/PhysRevA.59.169}{Phys. Rev. A {\bf 59}, 169 (1999)}.

\bibitem{Giovannetti04} V. Giovannetti, S. Lloyd and L. Maccone; Quantum-Enhanced Measurements: Beating the Standard Quantum Limit, 
\href{https://doi.org/10.1126/science.1104149}{Science {\bf 306}, 1330 (2004)}.

\bibitem{Helwig12} W. Helwig, W. Cui, J. I. Latorre, A. Riera, and H-K Lo; Absolute maximal entanglement and quantum secret sharing, \href{https://doi.org/10.1103/PhysRevA.86.052335}{Phys. Rev. A {\bf 86}, 052335 (2012)}.

\bibitem{Bhattacharya21} S. S. Bhattacharya, A. G. Maity, T. Guha, G. Chiribella, and M. Banik; Random-Receiver Quantum Communication, 
\href{https://doi.org/10.1103/PRXQuantum.2.020350}{PRX Quantum {\bf 2}, 020350 (2021)}.

\bibitem{Guhne09} O. Gühne and G. Toth; Entanglement detection, 
\href{https://doi.org/10.1016/j.physrep.2009.02.004}{Physics Reports {\bf 474}, 1 (2009)}.

\bibitem{Horodecki09} R. Horodecki, P. Horodecki, M. Horodecki, and K. Horodecki; Quantum entanglement, \href{https://doi.org/10.1103/RevModPhys.81.865}{Rev. Mod. Phys. {\bf 81}, 865 (2009)}.

\bibitem{Einstein35} A. Einstein, B. Podolsky, and N. Rosen; Can Quantum-Mechanical Description of Physical Reality Be Considered Complete?
\href{https://doi.org/10.1103/PhysRev.47.777}{Phys. Rev. {\bf 47}, 777 (1935)}. 

\bibitem{Bohr35} N. Bohr; Can Quantum-Mechanical Description of Physical Reality be Considered Complete?
\href{https://doi.org/10.1103/PhysRev.48.696}{Phys. Rev. {\bf 48}, 696 (1935)}.

\bibitem{Schrodinger35} E. Schrödinger; Discussion of Probability Relations Between Separated Systems, \href{https://doi.org/10.1017/S0305004100013554}{Math. Proc. Camb. Philos. Soc {\bf 31}, 555 (1935)}.

\bibitem{Bell64} {J. S. Bell; On the Einstein Podolsky Rosen paradox,
\href{https://doi.org/10.1103/PhysicsPhysiqueFizika.1.195}{Physics Physique Fizika {\bf 1}, 195 (1964)}.}

\bibitem{Bell66} J. S. Bell; On the Problem of Hidden Variables in Quantum Mechanics.
\href{https://doi.org/10.1103/RevModPhys.38.447}{Rev. Mod. Phys. {\bf 38}, 447 (1966)}

\bibitem{Wiseman07} H. M. Wiseman, S. J. Jones, and A. C. Doherty; Steering, Entanglement, Nonlocality, and the Einstein-Podolsky-Rosen Paradox,
\href{https://doi.org/10.1103/PhysRevLett.98.140402}{Phys. Rev. Lett. {\bf 98}, 140402 (2007)}.

\bibitem{Pusey12} M. Pusey, J. Barrett, and T. Rudolph; On the reality of the quantum state. \href{https://doi.org/10.1038/nphys2309}{Nature Phys {\bf 8}, 475 (2012)}. 

\bibitem{Bong20} K-W Bong, A. Utreras-Alarcón, F. Ghafari, Y-C Liang, N. Tischler, E. G. Cavalcanti, G. J. Pryde, and H. M. Wiseman; A strong no-go theorem on the Wigner’s friend paradox,
\href{https://doi.org/10.1038/s41567-020-0990-x}{Nature Phys {\bf 16}, 1199 (2020)}.

\bibitem{Aspect82} {A. Aspect, P. Grangier, and G. Roger; Experimental Realization of Einstein-Podolsky-Rosen-Bohm Gedankenexperiment: A New Violation of Bell's Inequalities,
\href{ttps://doi.org/10.1103/PhysRevLett.49.91}{Phys. Rev. Lett. {\bf 49}, 91 (1982)}.} 

\bibitem{Hensen15} {B. Hensen {\it et al.} Loophole-free Bell inequality violation using electron spins separated by 1.3 kilometres;
\href{https://doi.org/10.1038/nature15759}{Nature {\bf 526}, 682 (2015)}.} 

\bibitem{BIG18} {The BIG Bell Test Collaboration. Challenging local realism with human choices,
\href{https://doi.org/10.1038/s41586-018-0085-3}{Nature {\bf 557}, 212 (2018)}.} 

\bibitem{Rauch18} {D. Rauch {\it et al.} Cosmic Bell Test Using Random Measurement Settings from High-Redshift Quasars,
\href{https://doi.org/10.1103/PhysRevLett.121.080403}{Phys. Rev. Lett. {\bf 121}, 080403 (2018)}.}

\bibitem{Piani09} M. Piani and J. Watrous; All Entangled States are Useful for Channel Discrimination,
\href{https://doi.org/10.1103/PhysRevLett.102.250501}{Phys. Rev. Lett. {\bf 102}, 250501 (2009)}.

\bibitem{Hirche21} C. Hirche; Quantum Network Discrimination,
\href{https://arxiv.org/abs/2103.02404}{arXiv:2103.02404}.

\bibitem{Pirandola19} S. Pirandola, R. Laurenza, C. Lupo, and J. L. Pereira; Fundamental limits to quantum channel discrimination,
\href{https://doi.org/10.1038/s41534-019-0162-y}{npj Quantum Information {\bf 5}, 50 (2019)}. 

\bibitem{Takagi19} R. Takagi, B. Regula, K. Bu, Zi-Wen Liu, and G. Adesso; Operational Advantage of Quantum Resources in Subchannel Discrimination,
\href{https://doi.org/10.1103/PhysRevLett.122.140402}{Phys. Rev. Lett. {\bf 122}, 140402 (2019)}.

\bibitem{Bennett99upb} C. H. Bennett, D. P. DiVincenzo, T. Mor, P. W. Shor, J. A. Smolin, and B. M. Terhal; Unextendible Product Bases and Bound Entanglement, \href{https://doi.org/10.1103/PhysRevLett.82.5385}{Phys. Rev. Lett. \textbf{82}, 5385 (1999)}

\bibitem{Bennett99} C. H. Bennett, D. P. DiVincenzo, C. A. Fuchs, T. Mor, E. Rains, P. W. Shor, J. A. Smolin, and W. K. Wootters; Quantum nonlocality without entanglement,
\href{https://doi.org/10.1103/PhysRevA.59.1070}{Phys. Rev. A {\bf 59}, 1070 (1999)}.

\bibitem{Walgate00} J. Walgate, A. J. Short, L. Hardy, and V. Vedral; Local Distinguishability of Multipartite Orthogonal Quantum States,
\href{https://doi.org/10.1103/PhysRevLett.85.4972}{Phys. Rev. Lett. {\bf 85}, 4972 (2000)}.

\bibitem{Ghosh01} S. Ghosh, G. Kar, A. Roy, A. Sen(De), and U. Sen; Distinguishability of Bell States, \href{https://doi.org/10.1103/PhysRevLett.87.277902}{Phys. Rev. Lett. {\bf 87}, 277902 (2001)}.

\bibitem{Walgate02} J. Walgate and L. Hardy; Nonlocality, Asymmetry, and Distinguishing Bipartite States,
\href{https://doi.org/10.1103/PhysRevLett.89.147901}{Phys. Rev. Lett. {\bf 89}, 147901 (2002)}.

\bibitem{Horodecki03} M. Horodecki, A. Sen(De), U. Sen, and K. Horodecki; Local Indistinguishability: More Nonlocality with Less Entanglement,
\href{https://doi.org/10.1103/PhysRevLett.90.047902}{Phys. Rev. Lett. {\bf 90}, 047902 (2003)}.

\bibitem{Watrous05} J. Watrous; Bipartite Subspaces Having No Bases Distinguishable by Local Operations and Classical Communication,
\href{https://doi.org/10.1103/PhysRevLett.95.080505}{Phys. Rev. Lett. {\bf 95}, 080505 (2005)}.

\bibitem{Hayashi06} M. Hayashi, D. Markham, M. Murao, M. Owari, and S. Virmani; Bounds on Multipartite Entangled Orthogonal State Discrimination Using Local Operations and Classical Communication,
\href{https://doi.org/10.1103/PhysRevLett.96.040501}{Phys. Rev. Lett. {\bf 96}, 040501 (2006)}.

\bibitem{Niset06} J. Niset and N. J. Cerf; Multipartite nonlocality without entanglement in many dimensions,
\href{https://doi.org/10.1103/PhysRevA.74.052103}{Phys. Rev. A {\bf 74}, 052103 (2006)}.

\bibitem{Duan07} R. Duan, Y. Feng, Z. Ji, and M. Ying; Distinguishing Arbitrary Multipartite Basis Unambiguously Using Local Operations and Classical Communication,
\href{https://doi.org/10.1103/PhysRevLett.98.230502}{Phys. Rev. Lett. {\bf 98}, 230502 (2007)}.

\bibitem{Calsamiglia10} J. Calsamiglia, J. I. de Vicente, R. Muñoz-Tapia, and E. Bagan; Local Discrimination of Mixed States, 
\href{https://doi.org/10.1103/PhysRevLett.105.080504}{Phys. Rev. Lett. {\bf 105}, 080504 (2010)}.

\bibitem{Bandyopadhyay11} S. Bandyopadhyay; More Nonlocality with Less Purity,
\href{https://doi.org/10.1103/PhysRevLett.106.210402}{Phys. Rev. Lett. {\bf 106}, 210402 (2011)}.

\bibitem{Halder18} S. Halder; Several nonlocal sets of multipartite pure orthogonal product states,
\href{https://doi.org/10.1103/PhysRevA.98.022303}{Phys. Rev. A {\bf 98}, 022303 (2018)}.

\bibitem{Halder19(1)} S. Halder, M. Banik, and S. Ghosh; Family of bound entangled states on the boundary of the Peres set,
\href{https://doi.org/10.1103/PhysRevA.99.062329}{Phys. Rev. A {\bf 99}, 062329 (2019)}.

\bibitem{Agrawal19} S. Agrawal, S. Halder, and M. Banik; Genuinely entangled subspace with all-encompassing distillable entanglement across every bipartition,
\href{https://doi.org/10.1103/PhysRevA.99.032335}{Phys. Rev. A {\bf 99}, 032335 (2019)}.

\bibitem{Bhattacharya20} S. S. Bhattacharya, S. Saha, T. Guha, and M. Banik; Nonlocality without entanglement: Quantum theory and beyond,
\href{https://doi.org/10.1103/PhysRevResearch.2.012068}{Phys. Rev. Research {\bf 2}, 012068(R) (2020)}.

\bibitem{Banik21} M. Banik, T. Guha, M. Alimuddin, G. Kar, S. Halder, S. S. Bhattacharya; Multicopy Adaptive Local Discrimination: Strongest Possible Two-Qubit Nonlocal Bases,
\href{https://doi.org/10.1103/PhysRevLett.126.210505}{Phys. Rev. Lett. {\bf 126}, 210505 (2021)}.

\bibitem{Cohen08} S. M. Cohen; Understanding entanglement as resource: Locally distinguishing unextendible product bases,
\href{https://doi.org/10.1103/PhysRevA.77.012304}{Phys. Rev. A {\bf 77}, 012304 (2008)}. 

\bibitem{Halder19} S. Halder, M. Banik, S. Agrawal, and S. Bandyopadhyay; Strong Quantum Nonlocality without Entanglement, 
\href{https://doi.org/10.1103/PhysRevLett.122.040403}{Phys. Rev. Lett. {\bf 122}, 040403 (2019)}.

\bibitem{Rout19} S. Rout, A. G. Maity, A. Mukherjee, S. Halder, and M. Banik; Genuinely nonlocal product bases: Classification and entanglement-assisted discrimination,
\href{https://doi.org/10.1103/PhysRevA.100.032321}{Phys. Rev. A {\bf 100}, 032321 (2019)}.

\bibitem{Zhang19} Z-C Zhang and X. Zhang; Strong quantum nonlocality in multipartite quantum systems, \href{https://doi.org/10.1103/PhysRevA.99.062108}{Phys. Rev. A {\bf 99}, 062108 (2019)}.

\bibitem{Jiang20} D-H Jiang and G-B Xu; Nonlocal sets of orthogonal product states in an arbitrary multipartite quantum system, \href{https://doi.org/10.1103/PhysRevA.102.032211}{Phys. Rev. A {\bf 102}, 032211 (2020)}.

\bibitem{Shi20} F. Shi, M. Hu, L. Chen, and X. Zhang; Strong quantum nonlocality with entanglement, \href{https://doi.org/10.1103/PhysRevA.102.042202}{Phys. Rev. A {\bf 102}, 042202 (2020)}.

\bibitem{Yuan20} P. Yuan, G. Tian, and X. Sun; Strong quantum nonlocality without entanglement in multipartite quantum systems, \href{https://doi.org/10.1103/PhysRevA.102.042228}{Phys. Rev. A {\bf 102}, 042228 (2020)}.

\bibitem{Helstrom69} C. W. Helstrom; Quantum Detection and Estimation Theory, 
\href{https://doi.org/10.1007/BF01007479}{J. Stat. Phys. {\bf 1}, 231 (1969)}.

\bibitem{Holevo73} A. S. Holevo; Statistical decision theory for quantum systems, 
\href{https://doi.org/10.1016/0047-259X(73)90028-6}{J. Multivar. Anal. {\bf 3}, 337 (1973)}.

\bibitem{Yuen75} H. Yuen, R. Kennedy, and M. Lax; Optimum testing of multiple hypotheses in quantum detection theory, 
\href{https://doi.org/10.1109/tit.1975.1055351}{IEEE Trans. Inf. Theory {\bf 21}, 125 (1975)}.

\bibitem{Peres91} A. Peres and W. K. Wootters; Optimal detection of quantum information, \href{https://doi.org/10.1103/PhysRevLett.66.1119}{Phys. Rev. Lett. {\bf 66}, 1119 (1991)}.

\bibitem{Massar95} S. Massar and S. Popescu; Optimal Extraction of Information from Finite Quantum Ensembles,
\href{https://doi.org/10.1103/PhysRevLett.74.1259}{Phys. Rev. Lett. {\bf 74}, 1259 (1995)}.

\bibitem{Rinaldis04} S. De Rinaldis; Distinguishability of complete and unextendible product bases,
\href{https://doi.org/10.1103/PhysRevA.70.022309}{Phys. Rev. A {\bf 70}, 022309 (2004)}.

\bibitem{Coffman00} V. Coffman, J. Kundu, and W. K. Wootters; Distributed entanglement,
\href{https://doi.org/10.1103/PhysRevA.61.052306}{Phys. Rev. A {\bf 61}, 052306 (2000)}.


\bibitem{Nielsen99} M. A. Nielsen; Conditions for a Class of Entanglement Transformations,
\href{https://doi.org/10.1103/PhysRevLett.83.436}{Phys. Rev. Lett. {\bf 83}, 436 (1999)}.

\bibitem{Vidal99} G. Vidal; Entanglement of Pure States for a Single Copy,
\href{https://doi.org/10.1103/PhysRevLett.83.1046}{Phys. Rev. Lett. {\bf 83}, 1046 (1999)}.

\bibitem{Self1} Although $\ket{g_3}$ and $\ket{\psi}$ are LOCC incomparable under one-copy manipulation (which is the topic we are focusing here), given many copies of $\ket{g_3}$ one can, however, obtain $\ket{\psi}$ under LOCC, but the converse is a strict impossibility.  

\bibitem{Bennett00} C. H. Bennett, S. Popescu, D. Rohrlich, J. A. Smolin, and A. V. Thapliyal; Exact and asymptotic measures of multipartite pure-state entanglement,
\href{https://doi.org/10.1103/PhysRevA.63.012307}{Phys. Rev. A {\bf 63}, 012307 (2000)}.

\bibitem{Linden05} N. Linden, S. Popescu, B. Schumacher,and M. Westmoreland; Reversibility of Local Transformations of Multiparticle Entanglement,
\href{https://doi.org/10.1007/s11128-005-4608-0}{Quantum Inf Process {\bf 4}, 241 (2005)}. 

\bibitem{Zhang20} Z-C Zhang, X Wu, and X Zhang; Locally distinguishing unextendible product bases by using entanglement efficiently,
\href{https://doi.org/10.1103/PhysRevA.101.022306}{Phys. Rev. A {\bf 101}, 022306 (2020)}.

\end{thebibliography}
\end{document}